\begin{document}
\thispagestyle{empty}
	
\title{Robust Real-time Computing with Chemical Reaction Networks
  \thanks{This research supported in part by NSF grants 1900716 and 1545028.}}

\date{}

\author{
	\small{Willem Fletcher$^1$,
	Titus H. Klinge$^{2 [0000-0002-2297-6712]}$,
	James I. Lathrop$^3$,} \\
	\small{Dawn A. Nye$^{3 [0000-0003-1192-2740]}$,
	and Matthew Rayman$^3$} \\
	\\
	\small{$^1$ Carleton College, Northfield, MN, USA, fletcherw@carleton.edu} \\
	\small{$^2$ Drake University, Des Moines, IA, USA, titus.klinge@drake.edu} \\
	\small{$^3$ Iowa State University, Ames, IA, USA, jil,omacron,marayman@iastate.edu}
}

\maketitle

\begin{abstract}
 Recent research into analog computing has introduced new notions of computing real numbers.
 Huang, Klinge, Lathrop, Li, and Lutz defined a notion of computing real numbers in real-time with chemical reaction networks (CRNs), introducing the classes $\lcrn$ (the class of all Lyapunov CRN-computable real numbers) and $\rtcrn$ (the class of all real-time CRN-computable numbers).
 In their paper, they show the inclusion of the real algebraic numbers $ALG \subseteq \lcrn \subseteq \rtcrn$ and that $ALG \subsetneqq \rtcrn$ but leave open where the inclusion is proper.
 In this paper, we resolve this open problem and show \( \alg = \lcrn \subsetneqq \rtcrn \).
 However, their definition of real-time computation is fragile in the sense that it is sensitive to perturbations in initial conditions.
 To resolve this flaw, we further require a CRN to withstand these perturbations.
 In doing so, we arrive at a discrete model of memory.
 This approach has several benefits.
 First, a bounded CRN may compute values approximately in finite time.
 Second, a CRN can tolerate small perturbations of its species' concentrations.
 Third, taking a measurement of a CRN's state only requires precision proportional to the exactness of these approximations.
 Lastly, if a CRN requires only finite memory, this model and Turing machines are equivalent under real-time simulations.
\end{abstract}

\section{Introduction}
\label{sec:introduction}

Over the last few decades, many theories of molecular computing have emerged.
These theories help inform experimental research and help explore the boundaries of nanoscale computation.
Some models of molecular programming are structural, such as algorithmic self-assembly~\cite{FURCY2021,Hader2021}; some models are amorphous, such as chemical reaction networks~\cite{CAPPELLETTI202064,Severson2020}; and some models combine these to characterize more complex interactions~\cite{clamons2020,klinge_et_al:LIPIcs:2020:12959}.
Since molecular programming is a relatively new field, many open problems exist concerning the computational limits of these models.

Investigating the complexity of computing real numbers in computational models has historically significant roots.
In Turing's famous 1936 paper~\cite{turing1936}, he defined a real number to be computable if its ``expression as a decimal is calculable with finite means.''
Real numbers can also be classified according to how efficiently they can be computed by a Turing machine.
For example, rational numbers are efficiently computable because their recurring decimal pattern can be produced in real time---even by a finite automaton.
More formally, a number \( \alpha\in\mathbb{R} \) is \emph{real-time computable by a Turing machine} if \( n \) bits of its fractional component can be produced in \( O(n) \) time.
Many transcendental numbers are known to be real-time computable, but surprisingly, no irrational algebraic number is known to be real-time computable.
In fact, in 1965, Hartmanis and Stearns conjectured that if \( \alpha\in\mathbb{R} \) is real-time computable by a Turing machine, then it is either rational or transcendental~\cite{HartmanisStearns1965}.

Recent research into analog computing introduced new notions of computing real numbers.
Bournez \etal introduced the notion of computing a real number \emph{in the limit} with a general purpose analog computer (GPAC)~\cite{Bournez2006}.
To compute \( \alpha\in\mathbb{R} \) ``in the limit,'' a designated variable \( x(t) \) must satisfy \( \lim_{t\rightarrow\infty}x(t) = \alpha \).
Computing real numbers in this way has also been investigated in population protocols~\cite{Bournez2012} and chemical reaction networks (CRNs)~\cite{HuangGPAC2019}.
Huang \etal defined a number \( \alpha\in\mathbb{R} \) to be \emph{real-time computable by chemical reaction networks}, written \( \alpha\in\rtcrn \), if there exists a CRN with integral rate constants and a designated species \( X \) such that, if all species concentrations are initialized to zero, then \( x(t) \) converges to \( \alpha \) exponentially quickly~\cite{Huang2019}.
This means that after \( n \) seconds, the concentration of \( X \) is within \( 2^{-n} \) of \( \alpha \), so the CRN gains one bit of accuracy every second.
Huang \etal also required that all species concentrations be bounded to avoid the so-called \emph{Zeno paradox} of performing an infinite amount of computation in finite time using a fast-growing catalyst species~\cite{Pouly2017}.
When this restriction is lifted, the measure of time is no longer linear but rather a function of arc length.
In this sense, no power is lost via imposing a boundedness requirement.
Further, it eliminates the undesirable Zeno paradox from the model.

A key aspect of Huang \etal's definition of \( \rtcrn \) is the requirement that the CRN be initialized to \emph{all zeros}, prohibiting any encoding of \( \alpha \) in the initial condition of the CRN.
The authors showed that \( e, \pi \in \rtcrn \), leveraging the fact that the initial condition is \emph{exact}.
However, these constructions fail if their initial conditions are perturbed by any \( \epsilon > 0 \).
Huang \etal also defined a subfield of \( \rtcrn \) they called \emph{Lyapunov CRN-computable real numbers}, written \( \lcrn \).
The definition of \( \lcrn \) is similar to \( \rtcrn \) except with the additional constraint that the terminating state of the CRN must be an \emph{exponentially stable equilibrium point}.
Since an exponentially stable equilibrium point is \emph{attracting}, any initial condition within its basin of attraction will converge exponentially quickly to it.
As a result, any \( \alpha\in\lcrn \) can be computed even in the presence of bounded perturbations to initial conditions.
Huang \etal also proved that \( \alg\subseteq\lcrn\subseteq\rtcrn \) where \( \alg \) is the set of algebraic real numbers.
The authors left as an open problem which of these inclusions is strict.

An additional consequence of computing a real number \( \alpha \) ``in the limit'' with CRNs is that recovering the bits of \( \alpha \) is difficult.
Even if we produce \( \alpha \) \emph{exactly} in the concentration of a species \( X \), we cannot read its individual bits without an infinitely precise measurement device.
Alternatively, if a CRN produced the bits of \( \alpha \) as a sequence of measurable memory states, then the bits can be read even with imperfect measurements.

Another limitation of this method of computation is in implementation.
The concentration of a species in a solution containing a CRN is ultimately determined by the discrete, integral count of the species.
This places a countable limit on the number of ``exact'' values a concentration can achieve even when a CRN is otherwise perfectly initialized and executed.
In the mass action model, we often wave away this issue precisely because we do not have an infinitely precise measurement device.
This does, however, somewhat obviate the point of being able to calculate values precisely.
In fact, previous results concerning CRNs frequently abuse this hand waving to reach theorems that are true of the mass action kinematics but not of the reality it models.
Instead, a more reasonable question to ask is what values can we calculate robustly, quickly, and approximately.

In this paper, we show \( \alg = \lcrn \subsetneqq \rtcrn \) to resolve the open problem stated above.
This fully characterizes what values we may compute robustly and quickly; however, this definition of computation yet suffers from the inherent flaws described above.
To resolve this weakness of the model, we acknowledge these limitations and loosen the definition of computation to accept approximate results.
To do so, we only require that a CRN produces approximations of numbers in the sense that an open interval around a concentration $\alpha$ is in an equivalence class with $\alpha$ itself.
This approach has three major benefits.
First, a bounded CRN may compute not only a single value in finite time but also a sequence of values.
Second, a CRN can tolerate small perturbations of its species' concentrations (and potentially other parameters).
Third, taking a measurement of a CRN's state only requires precision proportional to the smallest of these intervals.

If we then fix a collection of these intervals into collection of memory maps for a CRN's species and allow it to compute their corresponding memory states in sequence, we obtain a discrete model characterizing a robust chemical computer.
Indeed, given, in this sense, a robust CRN and a memory map which fully describes it, a Turing machine may simulate the CRN by maintaining a tape for each of its species indicating what memory state that species is in.
We show that this simulation can be done in real-time for CRNs which use only a finite amount of memory.
Although we conjecture that CRNs which use an unbounded amount of memory can also be simulated in real-time (which, if true, would unify the analog and discrete Hartmanis-Stearns Conjectures), finite memory suffices for many real world applications.

The rest of the paper is organized as follows.
Section~2 reviews some necessary preliminaries used in the remainder of the paper.
Section~3 resolves the open problem \( \alg = \lcrn \subsetneqq \rtcrn \).
Section~4 characterizes CRNs in terms of a robust memory map.
Lastly, Section~5 discusses the consequences of the proceeding sections.

\section{Preliminaries}
\label{sec:preliminaries}

A \emph{Chemical Reaction Network} (CRN), $N$, is a tuple $N = (S,R)$, where $S$ is a finite number of species and $R$ is a finite set of reactions on those species.
In this paper we investigate \emph{deterministic} CRNs, \ie, CRNs under deterministic mass action semantics that are modeled with systems of differential equations~\cite{oEpsPoj98}.
Given a deterministic CRN, let $x_i(t)$ denote the real-valued concentration of $X_i$ at time $t$ for each species $X_i \in S$.
Let $\textbf{x} = (x_1,\ldots,x_n)$ denote the state of $N$, where $n = \card{S}$.
We write the rate of change of each $x_i$ as $\odx{x_i}{t} = f_i(x_1,\ldots,x_n)$ and the rate of change of the entire system as $\odx{\textbf{x}}{t} = \textbf{f}_N = (f_1,\ldots,f_n)$. 
Each $f_i$ is a polynomial determined by $N$~\cite{oEpsPoj98}.
In this paper, rate constants for each reaction in $R$ are integral, and thus each $f_i \in \integers[x_1,\ldots,x_n]$.
Furthermore, the initial concentrations of the species, given by an initial state $\textbf{x}(0) = \textbf{x}_0$, along with $\textbf{f}_N$ determine the unique behavior of $N$. Lastly, when $f_N(\textbf{z}) = 0$, we call $\textbf{z}$ a fixed point.

The definition of real-time computable by a CRN used in this paper is given by~\cite{HuangGPAC2019,Huang2019}.  We repeat the definition here for convenience.
\begin{definition}
	A real number $\alpha$ is \emph{real-time computable by CRNs} if there exists a CRN $N = (S,R)$ and a species $X \in S$ with the following properties:
	\begin{enumerate}
		\item (\emph{Integrality.}) All rate constants of $R$ are positive integers.
		\item (\emph{Boundedness.}) The concentration $x_i(t)$ for each species in $S$ is bounded by a constant $\beta$ for all time $t \in [0, \infty)$ when $\textbf{x}_0 = 0$.
		\item (\emph{Real-Time Convergence.}) If $N$ is initialized with $\textbf{x}_0 = 0$, then for all times $t \ge 1$, $\abs*{x(t) - \abs*{\alpha}} < 2^{-t}$.
	\end{enumerate}
	We denote the set of all real-time CRN-computable real numbers as $\rtcrn$.
\end{definition}
Excluding the species that converges to $\alpha$, the above definition places no restrictions on any species beyond that they be bounded.
In many cases, this may be undesirable.
The next definition formalizes the notion of converging to a single state, at which point the CRN can be considered finished. 
\begin{definition}
	An \emph{exponentially stable point} of a CRN is a state $\textbf{z} \in \nnr^n$ for which there exists $\alpha, \delta, C > 0$ such that, if the CRN is initialized to a state $\textbf{x}_0$ satisfying $\abs*{\textbf{z} - \textbf{x}_0} < \delta$, then for all times $t \ge 0$, $\abs*{\textbf{z} - \textbf{x}(t)} \le C e^{-\alpha t} \abs*{\textbf{z} - \textbf{x}_0}$.
\end{definition}

\begin{definition}
	\label{Lyapunov CRN Computability}
	A real number $\alpha$ is \emph{Lyapunov-CRN computable} if there exists a CRN $N = (S,R)$, a species $X_i \in S$, and a state $\textbf{z}$ with $\textbf{z}(X_i) = \abs*{\alpha}$ that satisfies the following properties:
	\begin{enumerate}
		\item (\emph{Integrality.}) All rate constants of $R$ are positive integers.
		\item (\emph{Boundedness.}) The concentration $x_i(t)$ for each species in $S$ is bounded by a constant $\beta$ for all time $t \in [0, \infty)$ when $\textbf{x}_0 = 0$.
		\item (\emph{Exponential Stability.}) $\textbf{z}$ is an exponentially stable point.
		\item (\emph{Convergence.}) If $N$ is initialized with $\textbf{x}_0 = 0$, then $\underset{t \rightarrow \infty}{\lim} \textbf{x}(t) = \textbf{z}$.
	\end{enumerate}
	We denote the set of all Lyapunov-CRN computable real numbers as $\reals_\text{LCRN}$.
\end{definition}

\begin{observation}
	\label{lem:Exponentially Stable Points Are Fixed Points}
	If $\textbf{z}$ is an exponentially stable point of a CRN, then it is a fixed point of that CRN.
\end{observation}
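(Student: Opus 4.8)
The plan is to exploit the fact that the exponential-stability bound holds for \emph{every} initial condition within distance $\delta$ of $\textbf{z}$, and in particular for the initial condition that coincides exactly with $\textbf{z}$. Since $\delta > 0$, the state $\textbf{x}_0 = \textbf{z}$ satisfies $\abs*{\textbf{z} - \textbf{x}_0} = 0 < \delta$, so it lies within the basin described by the definition of an exponentially stable point and the inequality applies to it.

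First I would substitute this initial condition into the stability inequality. With $\abs*{\textbf{z} - \textbf{x}_0} = 0$, the bound $\abs*{\textbf{z} - \textbf{x}(t)} \le C e^{-\alpha t} \abs*{\textbf{z} - \textbf{x}_0}$ collapses to $\abs*{\textbf{z} - \textbf{x}(t)} \le 0$ for all $t \ge 0$. Since a norm is nonnegative, this forces $\textbf{x}(t) = \textbf{z}$ for every $t \ge 0$; that is, the trajectory launched from $\textbf{z}$ is the constant function.

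Finally I would read off that a constant trajectory has vanishing derivative. Because $\textbf{x}(t) \equiv \textbf{z}$, we have $\odx{\textbf{x}}{t} = 0$ for all $t$, and evaluating the governing equation $\odx{\textbf{x}}{t} = \textbf{f}_N(\textbf{x}(t))$ at $t = 0$ yields $\textbf{f}_N(\textbf{z}) = 0$. By the definition given in the preliminaries, this is exactly the condition for $\textbf{z}$ to be a fixed point, completing the argument.

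There is essentially no hard step here: the entire content is that the exponential envelope has zero height when the initial displacement is zero, which pins the orbit at $\textbf{z}$. The only point requiring mild care is the justification that we may initialize \emph{exactly} at $\textbf{z}$, which is legitimate precisely because $\delta$ is strictly positive; after that, differentiating the now-constant trajectory and substituting into $\odx{\textbf{x}}{t} = \textbf{f}_N(\textbf{x})$ immediately gives $\textbf{f}_N(\textbf{z}) = 0$.
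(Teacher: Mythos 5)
Your proof is correct. The paper offers no proof at all for this statement---it is presented as a self-evident Observation---and your argument (initialize exactly at $\textbf{z}$, note that the stability envelope $C e^{-\alpha t}\abs*{\textbf{z}-\textbf{x}_0}$ collapses to zero, so the trajectory is constant, then substitute into $\odx{\textbf{x}}{t} = \textbf{f}_N(\textbf{x})$ to get $\textbf{f}_N(\textbf{z})=0$) is exactly the standard justification the authors are implicitly relying on.
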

Note that the converse of Observation~\ref{lem:Exponentially Stable Points Are Fixed Points} is not true.

We use $\alg$ to denote the set of real algebraic numbers of the rationals.
This is the set of real numbers which are the root of some polynomial $f \in \rationals[x]$ with rational coefficients.

\section{Lyapunov Reals are Algebraic}
\label{sec:LyapunovAlgebraic}

To investigate robustness issues in real-time computing, we first look at the relationship between $\lcrn$ and $\alg$ and show that $\lcrn = \alg$.
As a consequence, a bounded CRN may only compute the algebraic numbers reliably in the sense that they exist inside of a potential well.
Since Huang \etal proved that $\alg \subsetneqq \rtcrn$ and $\alg \subseteq \lcrn \subseteq \rtcrn$~\cite{Huang2019}, it suffices to show that $\alg = \lcrn$ to resolve that $\lcrn \subsetneqq \rtcrn$.
We prove this result in two parts.
First, we show that every exponentially stable fixed point is isolated.
Second, we show that isolated fixed points necessarily have algebraic components.

Let $E_N$ denote the set of exponentially stable points of a CRN, $N$, and let $F_N$ denote the set of fixed points of $N$.
Recall that fixed points are not necessarily isolated (consider a CRN which does nothing once initialized), however, the set of exponentially stable fixed points, $E_N$, \emph{are} isolated in $F_N$ (not just $E_N$), which we prove below.

\begin{lemma}
	\label{Exponentially Stable Points Are Isolated}
	If $N$ is a CRN and $\bm{z} \in E_N$, then $\bm{z}$ is isolated in $F_N$.
\end{lemma}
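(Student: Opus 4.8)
The plan is to argue by contradiction, exploiting the tension between a fixed point's stationary trajectory and the contraction guaranteed by exponential stability. Let $\alpha, \delta, C > 0$ be the constants witnessing $\bm{z} \in E_N$, so that any initialization $\bm{x}_0$ with $\abs*{\bm{z} - \bm{x}_0} < \delta$ satisfies $\abs*{\bm{z} - \bm{x}(t)} \le C e^{-\alpha t} \abs*{\bm{z} - \bm{x}_0}$ for all $t \ge 0$. Suppose, for contradiction, that $\bm{z}$ is \emph{not} isolated in $F_N$. Then every neighborhood of $\bm{z}$ meets $F_N$ in a point other than $\bm{z}$, so in particular there is a fixed point $\bm{y} \in F_N$ with $\bm{y} \neq \bm{z}$ and $\abs*{\bm{z} - \bm{y}} < \delta$.

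The key step is to initialize $N$ at $\bm{x}_0 = \bm{y}$ and observe that the resulting trajectory is constant. Since $\bm{y}$ is a fixed point we have $\textbf{f}_N(\bm{y}) = 0$, and because each component of $\textbf{f}_N$ is a polynomial, the right-hand side is locally Lipschitz; the standard existence-uniqueness theorem then forces the unique solution with $\bm{x}(0) = \bm{y}$ to be $\bm{x}(t) \equiv \bm{y}$. Substituting this stationary trajectory into the stability bound yields $\abs*{\bm{z} - \bm{y}} = \abs*{\bm{z} - \bm{x}(t)} \le C e^{-\alpha t} \abs*{\bm{z} - \bm{y}}$ for every $t \ge 0$.

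To finish, I would let $t \to \infty$: the right-hand side vanishes while the left-hand side equals the fixed positive number $\abs*{\bm{z} - \bm{y}}$, a contradiction. Therefore no fixed point distinct from $\bm{z}$ lies within $\delta$ of it, and $\bm{z}$ is isolated in $F_N$. I do not expect a serious obstacle here; the only points warranting care are the appeal to uniqueness of solutions (to conclude the trajectory from $\bm{y}$ is genuinely constant for all $t$, not merely stationary at $t = 0$) and the verification that $\bm{y}$, as a fixed point of the network, is a legitimate initialization to which the exponential stability estimate applies.
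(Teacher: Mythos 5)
Your proposal is correct and follows essentially the same argument as the paper's proof: pick a distinct fixed point within the stability radius $\delta$, note its trajectory is constant, and derive a contradiction from the exponential decay bound as $t \to \infty$. The only difference is cosmetic---you explicitly invoke local Lipschitz continuity and uniqueness of solutions to justify that the trajectory from the fixed point is constant, a step the paper treats as immediate.
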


\begin{proof}
	Suppose that $\bm{z}$ is not an isolated element of $F_N$.
	
	Let $\alpha, \delta, C > 0$ be such that they satisfy $\bm{z}$'s exponential stability.
	Since $\bm{z}$ is not isolated, there is a distinct $\bm{x_0} \in F_N$ within $\delta$ of $\bm{z}$ with $0 < \abs*{\bm{z} - \bm{x_0}} < \delta$.
	By definition, $\bm{x_0}$ is a fixed point, hence $\bm{f}_N(\bm{x_0}) = 0$.
	It then follows that if $N$ is initialized to $\bm{x_0}$, we have $\bm{x}(t) = \bm{x_0}$ for all times $t$.
	Hence,
	\[ \abs*{\bm{z} - \bm{x_0}} = \abs*{\bm{z} - \bm{x}(t)} \le C e^{-\alpha t} \abs*{\bm{z} - \bm{x_0}}. \]
	Since the left-hand side of this equation is a positive constant while the right-hand goes to $0$ as $t$ goes to infinity, no such $\bm{x_0}$ can exist and, consequentially, $\bm{z}$ must be isolated.
	\qed
\end{proof}

In order to show that isolated fixed points (and by Lemma~\ref{Exponentially Stable Points Are Isolated}, each exponentially stable point) can only have algebraic components, we utilize a powerful property of real closed fields.

\begin{definition}
	A field $F$ is a real closed field if and only if it satisfies either of the following equivalent definitions~\cite{basu_algorithms_2006}:
	\begin{itemize}
		\item $F$ satisfies the sentence $\sigma$ (written $\models_F \sigma$) if and only if $\models_{\reals} \sigma$.
		\item $F$ is not algebraically closed, but $F[\sqrt{-1}]$ is.
	\end{itemize}
\end{definition}

\begin{lemma}
	$\alg$ is a real closed field.~\cite{basu_algorithms_2006}
\end{lemma}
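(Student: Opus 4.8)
The plan is to verify the second of the two equivalent characterizations of a real closed field given in the preceding definition: I would show that $\alg$ is a field which is not algebraically closed, yet becomes algebraically closed upon adjoining $\sqrt{-1}$. First I would recall the standard fact from algebraic number theory that $\alg$ is a field: the sum, product, and (nonzero) inverse of real algebraic numbers are again real algebraic, since if $a$ is a root of $f \in \rationals[x]$ and $b$ is a root of $g \in \rationals[x]$, then $a+b$, $ab$, and $a^{-1}$ each generate a finite extension of $\rationals$ and hence are algebraic. That $\alg$ is not algebraically closed is immediate: because $\alg \subseteq \reals$, the polynomial $x^2 + 1$ has no root in $\alg$.

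The substantive step is to show that $\alg[\sqrt{-1}]$ is algebraically closed. My approach is to identify $\alg[\sqrt{-1}]$ with the field $\conj{\rationals}$ of all complex algebraic numbers (the algebraic closure of $\rationals$ inside $\mathbb{C}$), and then argue that the latter is algebraically closed. For the identification, the key observation is a conjugation argument: every $a, b \in \alg$ yields an algebraic number $a + b\sqrt{-1}$, so $\alg[\sqrt{-1}] \subseteq \conj{\rationals}$; conversely, if $\beta \in \mathbb{C}$ is algebraic then so is its complex conjugate $\conj{\beta}$ (it is a root of the same rational polynomial), whence $\tfrac{1}{2}(\beta + \conj{\beta})$ and $\tfrac{1}{2\sqrt{-1}}(\beta - \conj{\beta})$ are real algebraic numbers exhibiting $\beta$ as an element of $\alg[\sqrt{-1}]$. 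To see that $\conj{\rationals}$ is algebraically closed, I would invoke transitivity of algebraicity: a root of a polynomial with coefficients in $\conj{\rationals}$ is algebraic over $\conj{\rationals}$, and since each such coefficient is in turn algebraic over $\rationals$, the tower law forces the root to be algebraic over $\rationals$ and thus to already lie in $\conj{\rationals}$.

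I expect the main obstacle to be the algebraic-closedness of $\conj{\rationals}$, which ultimately leans on the fundamental theorem of algebra (to guarantee roots exist in $\mathbb{C}$) together with the tower law for field extensions; the conjugation identity $\alg[\sqrt{-1}] = \conj{\rationals}$ is the cleanest route I see, as it sidesteps verifying the Artin--Schreier square-root and odd-degree-root conditions directly. Alternatively, one could pursue the first characterization via the Tarski--Seidenberg transfer principle, but establishing the required first-order equivalence $\models_{\alg}\sigma \iff \models_{\reals}\sigma$ appears to demand essentially the same real-closure machinery, so the algebraic route seems more self-contained.
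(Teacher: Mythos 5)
Your proposal is correct, but it cannot take "the same approach as the paper" for a simple reason: the paper does not prove this lemma at all---it cites it outright from Basu, Pollack, and Roy. What you have done is supply the missing argument, and your route is the natural one: you verify the second of the two characterizations given in the paper's own definition. Concretely, you check that $\alg$ is a field (closure under sum, product, and inverse via finiteness of $\rationals(a,b)$ over $\rationals$), that it is not algebraically closed (since $x^2+1$ has no root in $\alg \subseteq \reals$), and that $\alg[\sqrt{-1}]$ is algebraically closed, by identifying $\alg[\sqrt{-1}]$ with the field $\conj{\rationals}$ of all complex algebraic numbers through the conjugation/real-and-imaginary-parts argument and then applying the fundamental theorem of algebra together with transitivity of algebraic extensions. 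All of these steps are sound. Note the division of labor that results: your argument establishes the second bullet of the paper's definition, while what the paper actually uses downstream (in Lemma~\ref{Isolated  Points Are Algebraic}) is the first bullet, the transfer property $\models_{\alg}\sigma$ iff $\models_{\reals}\sigma$. The bridge between the two bullets is the Artin--Schreier/Tarski equivalence that the paper's definition itself asserts with the same citation, so your proof completes the lemma only modulo that cited equivalence---exactly as your final paragraph anticipates when you observe that proving the transfer principle directly would require essentially the same real-closure machinery. In short: the paper buys the whole lemma from the literature; you prove the algebraic half yourself and buy only the equivalence of characterizations, which is a genuine improvement in self-containedness.
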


In short, since $\alg$ is a real closed field, we need only show that there is a first-order sentence that fully captures what it means to be an isolated fixed point of a CRN. This then implies that only values in $\alg$ can satisfy it.

\begin{lemma}
	\label{Isolated  Points Are Algebraic}
	If $\bm{z}$ is a fixed point of a CRN, $N$, that is isolated in $F_N$, then the components of $\bm{z}$ are in \alg.
\end{lemma}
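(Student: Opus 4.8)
The plan is to follow the strategy hinted at above: capture ``$\bm{x}$ is an isolated fixed point of $N$'' by a single first-order formula in the language of ordered fields whose only parameters are the integer coefficients of $\bm{f}_N$, and then transfer between $\reals$ and $\alg$ using the fact that both are real closed fields. Concretely, I would write
\[ \psi(\bm{x}) \;\equiv\; \Big( \textstyle\bigwedge_{i} f_i(\bm{x}) = 0 \Big) \wedge \exists \epsilon \Big[ \epsilon > 0 \wedge \forall \bm{y} \,\big( ( \bm{y} \neq \bm{x} \wedge \textstyle\sum_i (x_i - y_i)^2 < \epsilon^2 ) \rightarrow \textstyle\bigvee_i f_i(\bm{y}) \neq 0 \big) \Big], \]
so that for $\bm{x} \in \reals^n$ we have $\models_{\reals} \psi(\bm{x})$ exactly when $\bm{x}$ is a fixed point admitting a punctured neighborhood containing no other fixed point, \ie $\bm{x}$ is isolated in $F_N$. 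Since the $f_i$ have integer coefficients, $\psi$ uses only parameters from $\integers \subseteq \alg$, and the metric ball has been replaced by the polynomial inequality $\sum_i (x_i - y_i)^2 < \epsilon^2$ so that $\psi$ stays genuinely first-order.

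Next I would observe that the set $I \subseteq \reals^n$ of isolated fixed points is \emph{finite}. The fixed-point set $F_N$ is the common real zero set of the polynomials $f_i$, hence a semialgebraic set, and as such it has only finitely many connected components; every isolated point is a singleton clopen subset of $F_N$, hence a connected component, so $I$ is finite, say $\card{I} = k$ with $\bm{z} \in I$. Because being finite of size $k$ is itself first-order expressible, the sentence
\[ \sigma_k \;\equiv\; \exists \bm{x}^{(1)} \cdots \exists \bm{x}^{(k)} \Big[ \textstyle\bigwedge_{a} \psi(\bm{x}^{(a)}) \wedge \bigwedge_{a \neq b} \bm{x}^{(a)} \neq \bm{x}^{(b)} \wedge \forall \bm{y}\,\big( \psi(\bm{y}) \rightarrow \textstyle\bigvee_a \bm{y} = \bm{x}^{(a)} \big) \Big] \]
asserts ``exactly $k$ tuples satisfy $\psi$'' and holds over $\reals$.

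The transfer step is where the real closed field hypothesis does the work. Since $\reals$ and $\alg$ are both real closed, the theory of real closed fields is model complete, so $\alg$ is an elementary substructure of $\reals$, $\alg \preceq \reals$; in particular $\models_{\alg} \sigma_k$ holds iff $\models_{\reals} \sigma_k$ holds, and for every $\bm{a} \in \alg^n$ we have $\models_{\alg} \psi(\bm{a})$ iff $\models_{\reals} \psi(\bm{a})$. Thus $\psi$ picks out exactly $k$ tuples of $\alg^n$, and by the latter equivalence these are precisely the members of $I \cap \alg^n$. Hence $\card{I \cap \alg^n} = k = \card{I}$, which forces $I \cap \alg^n = I$, so every isolated fixed point---$\bm{z}$ among them---lies in $\alg^n$, giving that each component of $\bm{z}$ is in $\alg$.

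The main obstacle is not any single deep fact but the care required to make the reduction legitimate: one must phrase isolation strictly within first-order logic over ordered fields and must promote the open formula $\psi$ to a genuine \emph{sentence} $\sigma_k$ before invoking transfer, which is exactly why the finiteness of $I$ is needed. If instead one tried to transfer $\psi(\bm{z})$ directly, the parameter $\bm{z}$ is a priori only real, not algebraic, so the equivalence $\models_{\alg} \psi(\bm{z}) \iff \models_{\reals} \psi(\bm{z})$ would be unavailable; the counting argument sidesteps this precisely by quantifying $\bm{z}$ away.
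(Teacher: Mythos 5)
Your proof is correct, but it takes a genuinely different route from the paper's. Both arguments reduce the problem to a transfer between $\reals$ and $\alg$ as real closed fields, and both must confront the same obstacle, which you correctly identify at the end: the point $\bm{z}$ is a priori a tuple of \emph{real} parameters and so cannot appear in the statement being transferred. The paper resolves this locally: using the isolation radius it chooses \emph{rational} numbers $l_i < z_i < r_i$ defining a small box around $\bm{z}$ containing no other fixed point, writes the parameter-free sentence ``there is exactly one fixed point in this rational box,'' transfers it to $\alg$ to obtain an algebraic fixed point $\bm{w}$ in the box, and concludes $\bm{w} = \bm{z}$ from uniqueness over $\reals$. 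You resolve it globally: you quantify $\bm{z}$ away entirely, using the formula $\psi$ defining all isolated fixed points, the finiteness of that set, and the exact-count sentence $\sigma_k$. Your route requires two pieces of machinery the paper avoids: the theorem that a real algebraic (semialgebraic) set has finitely many connected components, and model completeness of the theory of real closed fields---you genuinely need $\alg \preceq \reals$ as an elementary substructure to identify the $k$ algebraic witnesses of $\psi$ with members of $I$, whereas the paper's sentence-level transfer needs only the elementary equivalence built into its stated definition of a real closed field. In exchange, your argument is uniform and slightly more general (it shows at once that every isolated point of any $\integers$-definable set has algebraic coordinates), and it avoids the mildly fussy step of selecting rational box endpoints and checking the box lies inside the isolating ball. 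Both proofs are sound; the paper's is more self-contained given the tools it introduces, while yours is conceptually cleaner once the two standard facts are granted.
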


\begin{proof}
	Assume the hypothesis, let $N$ be a CRN, and let $\bm{z}$ be an exponentially stable point of $N$.
	
	To prove the lemma, we will construct a first order logic sentence and invoke a useful property stemming from real closed fields.
	For some neighborhood $\mathscr{U}_{\bm{z}}$ around $\bm{z}$, we will construct the following sentence, $\sigma_{\bm{z}}$.
	\[
	\sigma_{\bm{z}} \equiv
	\exists \bm{x} \paren*{
		\bm{f}_N \paren*{\bm{x}} = 0
		\land \bm{x} \in \mathscr{U}_{\bm{z}}
		\land \paren*{
			\exists \bm{y} \paren*{\bm{f}_N \paren*{\bm{y}} = 0
				\land \bm{y} \in \mathscr{U}_{\bm{z}}} \rightarrow \bm{x} = \bm{y}}}
	\]
	Intuitively, $\sigma_{\bm{z}}$ states that there is an open ball around $\bm{z}$ such that there are no other elements of $F_N$ inside of it.
	This is, of course, the definition of being an isolated point.
	
	To begin, we require a language for our first-order logic.
	It suffices to use the language of rings (and fields), $L = (0,1,+,\cdot,-)$, where these symbols all have their usual meaning.
	It's routine to verify that the associated axioms are all easily written in first order logic.
	
	Now that we have a language, we construct a structure $\mathfrak{U}$ with the usual required properties.
	\begin{itemize}
		\item Assign to $\forall$ a universe, $\mathscr{U}$.
		\item Assign to each $n$-place predicate symbol a subset of $\mathscr{U}^n$.
		\item Assign $\func[f^{\mathfrak{U}}]{\mathscr{U}^n}{\mathscr{U}}$ for each $n$-place function.
		\item Assign to each constant an element of $\mathscr{U}$.
	\end{itemize}
	We choose a universe $\mathscr{U}$ such that it forms a field with the binary operations $+$ and $\cdot$.
	We will explicitly specify it later.
	For now, we rewrite the terms of $\sigma_{\bm{z}}$ in order to work within the constraints of the tools available to first-order logic.
	\begin{itemize}
		\item Vectors are an undefined syntactic sugar in first-order logic.
		We instead write $\exists \bm{x}$ as $\exists x_1 \ldots \exists x_n$.
		\item With only $0$ and $1$ available, we write an integer $n$ as the sum of $n$ $1$s, the additive inverse of such a sum, or as simply $0$.
		For a rational, $v = \frac{a}{b}$, we write $\exists v (v \cdot b = a)$.
		\item We write each $f_i$ in its full form since $f_i \in \integers[x_1,\ldots,x_n]$.
		Similarly, we write $\bm{f}_N = 0$ as $f_1 = 0 \land \ldots \land f_n = 0$.
		\item We write $a < b$ as $\exists c (b = c^2 + a )\land a \ne b$.
		\item Lastly, we must rewrite $\bm{x} \in \mathscr{U}_{\bm{z}}$.
		Since $\bm{z}$ is an exponentially stable point, it is an isolated point by Lemma \ref{Exponentially Stable Points Are Isolated}.
		As such, there is an open ball of radius $r > 0$ centered at $\bm{z}$ containing no other fixed points.
		Let $\epsilon = \frac{r}{\sqrt{n}}$.
		For each coordinate $z_i$ of $\bm{z}$, pick rational numbers $l_i \in (z_i - \epsilon,z_i)$ and $r_i \in (z_i,z_i + \epsilon)$.
		Then we now write $l_1 < x_1 \land x_1 < r_1 \land \ldots \land l_n < x_n \land x_n < r_n$ as a valid neighborhood of $\bm{z}$ for $\bm{x} \in \mathscr{U}_{\bm{z}}$.
		Upon inspection, it's clear that the maximum distance from $\bm{z}$ to any point within this neighborhood is strictly less than $r$.
	\end{itemize}
	Thus we can write $\sigma_{\bm{z}}$ in terms of first order logic, and by construction $\models_{\reals} \sigma_{\bm{z}}$.
	
	Since $\alg$ is a real closed field, it follows that $\models_{\alg} \sigma_{\bm{z}}$.
	As a consequence, there is a fixed point, $\bm{w}$, in $\mathscr{U}_{\bm{z}}$ with all coordinates in $\alg$.
	However, since $\bm{z}$ and $\bm{w}$ are both fixed points in $\mathscr{U}_{\bm{z}}$ and $\models_{\reals} \sigma_{\bm{z}}$, it must be the case that $\bm{z} = \bm{w}$, so $\bm{z}$ has coordinates all in $\alg$.
	\qed
\end{proof}

Using these lemmas, it is now straightforward to prove the theorem.

\begin{theorem}
	\label{thm:algeqlup}
	$\alg = \lcrn$
\end{theorem}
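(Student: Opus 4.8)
The plan is to combine the two preceding lemmas with the inclusion $\alg \subseteq \lcrn$ that Huang \etal have already established. Since that inclusion supplies one direction of the claimed equality for free, it suffices to prove the reverse containment $\lcrn \subseteq \alg$. The whole argument is then a short chain of implications: every Lyapunov-computable number is witnessed by an exponentially stable state, Lemma~\ref{Exponentially Stable Points Are Isolated} shows such a state is isolated in the fixed-point set, and Lemma~\ref{Isolated  Points Are Algebraic} shows isolated fixed points have algebraic coordinates. The designated coordinate carries the value we care about, so algebraicity transfers to it.

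Concretely, I would fix an arbitrary $\alpha \in \lcrn$ and unpack Definition~\ref{Lyapunov CRN Computability} to obtain a CRN $N$, a species $X_i \in S$, and an exponentially stable state $\bm{z}$ with $\bm{z}(X_i) = \abs*{\alpha}$. By Observation~\ref{lem:Exponentially Stable Points Are Fixed Points}, $\bm{z}$ is a fixed point, so $\bm{z} \in F_N$; and since $\bm{z} \in E_N$, Lemma~\ref{Exponentially Stable Points Are Isolated} gives that $\bm{z}$ is isolated in $F_N$. Feeding this isolated fixed point into Lemma~\ref{Isolated  Points Are Algebraic} forces every coordinate of $\bm{z}$ into $\alg$; in particular the designated coordinate $\bm{z}(X_i) = \abs*{\alpha}$ lies in $\alg$. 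It remains only to pass from $\abs*{\alpha}$ to $\alpha$: because $\alg$ is a field it is closed under additive inverses, so $\alpha = \pm\abs*{\alpha} \in \alg$ regardless of sign. This establishes $\lcrn \subseteq \alg$, which with $\alg \subseteq \lcrn$ yields the theorem.

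As for difficulty, essentially all the substantive work has already been discharged by the two lemmas---the isolation argument, which exploits the exponential decay in the stability bound to rule out nearby distinct fixed points, and the model-theoretic transfer argument, which encodes ``isolated fixed point'' as a first-order sentence over a real closed field so that satisfiability descends from $\reals$ to $\alg$. What is left at the theorem level is genuinely bookkeeping, and the only point deserving even a moment's care is the passage from the algebraicity of $\abs*{\alpha}$ to that of $\alpha$, which relies on $\alg$ being a field closed under negation rather than merely an arbitrary set of reals.
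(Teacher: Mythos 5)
Your proposal is correct and takes essentially the same route as the paper's own proof: fix $\alpha \in \lcrn$, chain Lemma~\ref{Exponentially Stable Points Are Isolated} and Lemma~\ref{Isolated  Points Are Algebraic} to conclude $\bm{z}(X_i) = \abs*{\alpha} \in \alg$, and cite the known inclusion $\alg \subseteq \lcrn$ for the converse. Your two small additions---explicitly invoking Observation~\ref{lem:Exponentially Stable Points Are Fixed Points} to place $\bm{z}$ in $F_N$, and justifying the step from $\abs*{\alpha}$ to $\alpha$ via closure of $\alg$ under negation---are points the paper leaves implicit, and they only make the argument tighter.
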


\begin{proof}
	Let $\alpha \in \lcrn$, and let $N$, $X_i$, and $\bm{z}$ be the CRN, designated species, and exponentially stable point that testify to this.
	By definition, $\bm{z}$ is exponentially stable;
	by Lemma~\ref{Exponentially Stable Points Are Isolated}, $\bm{z}$ is isolated in $F_N$;
	by Lemma~\ref{Isolated  Points Are Algebraic}, every component of $\bm{z}$ is algebraic.
	Thus, $\bm{z}(X_i) = |\alpha|$ is algebraic, and therefore $\alpha \in \alg$.
	As the converse is known~\cite{Huang2019}, we have $\alg = \lcrn$.
	\qed
\end{proof}
\section{A Robust Notion of Memory in CRNs}
\label{sec:mmcrn}

In the previous section, we concerned ourselves with CRNs which are permitted infinite precision to compute real values robustly in the limit.
This excuses several impossibilities for the elegance of its model at the expense of realism.
In practice, these CRNs would compute their intended values robustly in approximation and would require only finite time.

In this section we explore the consequences of requiring a CRN to be robust in this sense, that is that they compute values approximately in finite time.
In particular, we characterize the behavior of these robust CRNs in terms of these approximations to arrive at a somewhat paradoxical discrete model of analog computing.

Recall that boundedness is one of the three criteria for a real-time CRN.
For this section, we use the following definitions of boundedness.

\begin{definition}
	A CRN $N = (S,R)$ is \emph{$\beta$-bounded} at $\bm{x_0} \in \nnr^S$ if, when initialized to $\bm{x_0}$, there exists some $\beta > 0$ such that $x < \beta$ for each $X \in S$. Moreover, $N$ is \emph{uniformly $\beta$-bounded} on $O \subseteq  \nnr^S$ if there is some $\beta > 0$ for which $N$ is bounded on each $\bm{x_0} \in O$ by $\beta$.
\end{definition}

Unless otherwise specified, a bounded CRN is initialized to the point at which it is bounded.
Similarly, a uniformly bounded CRN is initialized to a point at which it is bounded (and is implicitly bounded at any initial point).

There are two natural ways by which a CRN may compute a number $\alpha$.
It may either do so exactly when a species' concentration becomes $\alpha$ or in the limit as per Lyapunov-CRN computability, real-time computability, or some slower manner.
Both approaches, however, are imperfect.
In the latter case, the concentration of the species computing $\alpha$ either must always maintain a non-zero distance from $\alpha$ after any finite time or, at best, suffers from the same limitation of computing $\alpha$ exactly: the inability to remain at $\alpha$.
The following theorem and corollary formalize this notion.

\begin{theorem}
	\label{Rest Is for the Weak}
	Let $N = (S,R)$ be a bounded CRN.
	For each species $X \in S$, $x$ is either constant or the set of times for which $\frac{dx}{dt} = 0$ is countable.
\end{theorem}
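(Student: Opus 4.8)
The plan is to leverage the real-analyticity of solutions to polynomial ODE systems together with the identity theorem for real-analytic functions. The key observation is that the dichotomy in the statement is exactly the dichotomy between a nonzero analytic function vanishing identically and having only isolated zeros: if I can show that $\frac{dx}{dt}$, viewed as a function of $t$, is real-analytic on $[0,\infty)$, then the theorem follows essentially for free.

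First I would note that since each $f_i \in \integers[x_1,\ldots,x_n]$, the vector field $\bm{f}_N$ is real-analytic on all of $\reals^n$. A classical result in the theory of ordinary differential equations states that an autonomous system $\frac{d\bm{x}}{dt} = \bm{f}_N(\bm{x})$ with real-analytic right-hand side has real-analytic solutions on its maximal interval of existence. Because $N$ is bounded, the trajectory $\bm{x}(t)$ stays within a compact subset of $\reals^n$ and therefore cannot blow up in finite time, so the solution exists for all $t \in [0,\infty)$; in fact it extends analytically to an open interval $I \supseteq [0,\infty)$. Fixing the species $X = X_i$, I would then set $g(t) := \frac{dx}{dt} = f_i(\bm{x}(t))$. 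As the composition of the polynomial $f_i$ with the analytic map $\bm{x}(\cdot)$ (equivalently, as the derivative of the analytic function $x(t)$), $g$ is real-analytic on the connected interval $I$.

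The conclusion then comes from the identity theorem on a connected domain: either the zero set of $g$ has an accumulation point in $I$, in which case $g \equiv 0$ and hence $x$ is constant, or the zeros of $g$ are isolated. In the latter case I would argue that an isolated (equivalently, discrete) subset of $\reals$ is countable — the zero set of the continuous function $g$ is closed, so its intersection with each compact interval $[0,m]$ is a closed discrete, hence finite, set, and the full zero set is a countable union of finite sets. Restricting to $[0,\infty) \subseteq I$ gives exactly the claimed alternative.

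The only genuinely delicate step is the passage from a polynomial vector field to real-analyticity of the trajectory, and I would be careful either to cite the standard analytic-ODE existence theorem or to sketch the power-series argument that produces it; boundedness is needed here only to guarantee that the domain is all of $[0,\infty)$ rather than a proper subinterval. Once analyticity is established, the identity theorem and the countability of a discrete set are routine and require no further work.
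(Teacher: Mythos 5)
Your proposal is correct and takes essentially the same route as the paper: both rest on the real-analyticity of solutions to the polynomial ODE system and then invoke the identity theorem to conclude that for a non-constant species the zeros of $x'$ are isolated, hence countable. The only cosmetic difference is that you obtain analyticity on an open interval containing $[0,\infty)$ by extending the solution backward past $t=0$, whereas the paper asserts analyticity only on $(0,\infty)$ and handles $t=0$ separately by continuity of $x'$.
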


\begin{proof}
	Let $N$ be as given and fix $X \in S$.
	Assume that $x$ is non-constant but that $\text{ker}(x')$ is uncountably infinite.
	As such, $\text{ker}(x')$ must have infinitely many accumulation points in $\nnr$.
	In particular, it contains an sequence $\collection{a_n}{n}{\naturals}$ not containing $0$ such that $\underset{n \rightarrow \infty}{\lim} a_n \neq 0$.
	Then since $x'$ is analytic on the connected open set $\positives$~\cite{oKraPar02}, by the Identity Theorem~\cite{lewis_global_analysis_2014}, $x' = 0$ on $\positives$.
	Moreover, since $x'$ is continuous on $\nnr$, it must be the case that $x'(0) = 0$.
	But if $x'$ is identically $0$, then $x$ must be constant, a contradiction.
	\qed
\end{proof}

\begin{corollary}
	\label{No Exact Numbers}
	Let $N = (S,R)$ be a bounded CRN. Pick $c \in \nnr$.
	Then for any non-constant species $X \in S$, the set of times $t \in \nnr$ where $x(t) = c$ is countable.
\end{corollary}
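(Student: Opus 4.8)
The plan is to deduce this from Theorem~\ref{Rest Is for the Weak} by transferring a hypothetical uncountable level set into an uncountable zero set of the derivative. Write $Z_c = \set*{t \in \nnr : x(t) = c}$ and suppose, toward a contradiction, that $Z_c$ is uncountable for a non-constant species $X$. First I would establish the key link: every accumulation point of $Z_c$ lies in $\text{ker}(x')$. Indeed, if $t_n \in Z_c$ with $t_n \to t_0$ and $t_n \neq t_0$, then continuity of $x$ gives $x(t_0) = c$, so each difference quotient $\frac{x(t_n) - x(t_0)}{t_n - t_0}$ equals $0$, and passing to the limit yields $x'(t_0) = 0$. Here I use that $x$, being the solution of a polynomial ODE, is differentiable on all of $\nnr$.

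Next I would count accumulation points. Because $\reals$ is separable, the isolated points of any subset are countable---each is the center of a disjoint rational-radius ball---so all but countably many points of $Z_c$ are accumulation points of $Z_c$. Since $Z_c$ is assumed uncountable, it therefore has uncountably many accumulation points, and by the previous step each of these lies in $\text{ker}(x')$. Hence $\text{ker}(x')$ is uncountable.

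This contradicts Theorem~\ref{Rest Is for the Weak}, which guarantees that for a non-constant species the set $\text{ker}(x')$ is countable; consequently $Z_c$ must be countable. The step I expect to carry the real content is the implication ``uncountably many points with $x = c$ forces uncountably many points with $x' = 0$,'' combining the difference-quotient observation with the separability count; the only minor care needed is at the boundary $t = 0$, where only a one-sided derivative is available, but this single point cannot affect countability in any case. Alternatively, one could bypass the theorem and argue directly that $x - c$ is analytic and non-constant on $\positives$, so by the Identity Theorem its zeros are isolated and thus countable---this essentially re-runs the proof of Theorem~\ref{Rest Is for the Weak}.
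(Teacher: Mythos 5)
Your proof is correct, and it reaches the corollary by a genuinely different mechanism than the paper, although both arguments lean on Theorem~\ref{Rest Is for the Weak} as the key input. The paper splits the level set $Z_c = \set*{t \in \nnr : x(t) = c}$ by the value of the derivative: at times where $x' = 0$ the level set sits directly inside $\text{ker}(x')$, and for times where $x' \neq 0$ it applies the mean value theorem to place a zero of $x'$ strictly between any two such times, then concludes countability of $Z_c$ from countability of $\text{ker}(x')$. You instead split $Z_c$ into its isolated points, which are countable by separability, and its accumulation points, which you push into $\text{ker}(x')$ with the difference-quotient computation. Your route actually buys extra rigor: the inference pattern the paper closes with---``between any two points of $A$ there is a point of the countable set $B$, hence $A$ is countable''---is not a valid principle for arbitrary sets (take $A$ to be the Cantor set and $B$ the midpoints of the removed intervals), so the paper's second case implicitly needs a further observation to be airtight, e.g.\ that $x'(t) \neq 0$ together with continuity of $x'$ makes $x$ strictly monotone near $t$, so that such $t$ is isolated in $Z_c$ and isolated points are countable. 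Your isolated-versus-accumulation decomposition performs exactly this count correctly from the outset. The boundary caveat you flag at $t = 0$ is handled either by the one-sided difference quotient or by discarding a single point, and your alternative argument---the Identity Theorem applied to $x - c$ on $\positives$---is also sound, though, as you note, it re-runs the proof of Theorem~\ref{Rest Is for the Weak} rather than quoting it.
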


\begin{proof}
	For $X \in S$, suppose $x$ is not constant.
	When $x$ attains the value $c$, either $x' = 0$ or $x' \neq 0$.
	In the former case, $\text{ker}(x')$ is countable by Theorem~\ref{Rest Is for the Weak}.
	In the latter case, let $t_1 < t_2$ be such that $x(t_1) = x(t_2) = c$.
	Since $x$ is differentiable, there is some $t_3 \in (t_1,t_2)$ such that $x'(t_3) = 0$ by the mean value theorem.
	But if between any two times where $x = c$ there is another time where $x' = 0$, then there can at most be a countable number of such times since $\text{ker}(x')$ is countable.
	\qed
\end{proof}

It is clear from Corollary~\ref{No Exact Numbers} that computing an exact value with a CRN is, if not impossible, then a less meaningful concept than one would prefer.
This is not inherently problematic as a model of computation.
A CRN is capable of computing any computable function in the limit~\cite{Fages2017}.

In each of these models, however, there is the implicit assumption that a CRN may be precisely constructed by which we mean each rate constant and the initial concentration of each species is exactly as prescribed.
In practice, this is impractical, which leads us to a notion of robustness.
A CRN, informally speaking, is ``robust'' if it can tolerate a small perturbation of its concentrations (or rate constants) at any time without affecting its function.
This is intuitively a difficult task since changing any such condition clearly alters the solution to the system of ODEs describing the CRN.

Exponentially stable points are a good example of robustness in the following sense.
If a CRN manages to get within an $\epsilon$-ball of such a point $\bm{z}$, it proceeds to $\bm{z}$ in the limit without exception.
Ideally, a robust CRN would transition from exponentially stable point to exponentially stable point during its computation with some outside force periodically driving it away from each stable equilibrium.

Exponential stability is a far stricter requirement than is necessary to compute a number $\alpha$, but it does illustrate an important point.
If a CRN computes $\alpha$ either in the limit or for longer than a countable set of times, there is always a buffer zone around it which must necessarily be considered in an equivalence class with $\alpha$.
In Figure~\ref{fig:runningExample}, this corresponds to the intervals labeled $A$, $B$, and $C$ which could be considered equivalence classes for $1$, $\frac{1}{2}$, and $0$ respectively.
We formalize this notion in the following theorems and definitions.

\begin{theorem}
	\label{Buffer Zone}
	Let $N = (S,R)$ be a bounded CRN, and let $X \in S$ be a non-constant species.
	For any time $t_0 \in \nnr$, there exists a $\delta > 0$ such that for all $t \in (t_0,t_0 + \delta)$, $x(t) \neq x(t_0)$.
	Moreover, there exists an $\epsilon > 0$ and a $t \in (t_0,t_0 + \delta)$ such that $\abs*{x(t) - x(t_0)} > \epsilon$.
\end{theorem}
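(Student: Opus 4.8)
The plan is to exploit the real-analyticity of $x$ that already underlies Theorem~\ref{Rest Is for the Weak}. Since $\bm{f}_N$ is a polynomial (hence entire) vector field and $N$ is bounded, the trajectory does not escape in finite time, so the solution $x(t)$ is real-analytic on all of $\nnr$; moreover, standard local existence for the polynomial system $\odx{\bm{x}}{t} = \bm{f}_N$ extends the solution analytically slightly to the left of $0$, so $x$ is analytic at every $t_0 \in \nnr$, including $t_0 = 0$. I would then work with the auxiliary function $g(t) = x(t) - x(t_0)$, which is itself analytic and satisfies $g(t_0) = 0$.

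For the first claim I would show that $t_0$ is an \emph{isolated} zero of $g$. Because $X$ is non-constant and $g$ is analytic on the connected domain, $g \not\equiv 0$ (an analytic function constant on a subinterval of a connected domain is globally constant, which would force $x$ constant). Hence, exactly as in the proof of Theorem~\ref{Rest Is for the Weak}, the Identity Theorem forbids the zeros of $g$ from accumulating at $t_0$: if they did, $g$ would vanish identically, contradicting non-constancy of $x$. Therefore there is a punctured neighborhood of $t_0$ on which $g \neq 0$; restricting to its right half yields $\delta > 0$ with $x(t) \neq x(t_0)$ for all $t \in (t_0, t_0 + \delta)$.

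The ``moreover'' clause is then immediate and requires no new work. Fixing any single $t_1 \in (t_0, t_0 + \delta)$, the first claim gives $x(t_1) \neq x(t_0)$, so $\abs*{x(t_1) - x(t_0)} > 0$; taking $\epsilon = \frac{1}{2}\abs*{x(t_1) - x(t_0)}$ and $t = t_1$ produces the asserted strict separation. The content of the theorem is thus entirely in establishing the isolated-zero property, with the buffer gap falling out by continuity.

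The one step I expect to require genuine care is the boundary case $t_0 = 0$, where $0$ sits on the edge of $\nnr$ rather than in its interior, while the Identity Theorem as invoked in Theorem~\ref{Rest Is for the Weak} is stated for the open set $\positives$. To cover it I would appeal to the analytic extension of the solution to an open interval $(t_-, t_+)$ containing $0$ with $t_- < 0$, guaranteed because $\bm{f}_N$ is defined and analytic on all of $\reals^n$; this makes $0$ an interior point of the domain of analyticity of $g$, so the isolated-zero argument applies without modification. The remaining subtlety is merely to confirm that boundedness rules out forward finite-time blow-up (so $t_+ = \infty$ and the trajectory is analytic throughout $[0,\infty)$), which is already part of the boundedness hypothesis.
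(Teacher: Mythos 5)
Your proof is correct, but it follows a genuinely different decomposition than the paper's. The paper first establishes a standalone lemma (Lemma~\ref{Well-Ordered Zeroes}) that $\text{ker}(x')$ is well-ordered---the Identity Theorem applied to $x'$---and then argues via the mean value theorem: any return of $x$ to the value $x(t_0)$ would force a zero of $x'$ strictly in between, so taking $t^*$ to be the least element of $\text{ker}(x')$ greater than $t_0$ and setting $\delta = t^* - t_0$ and $\epsilon = \frac{1}{2}\abs*{x(t_0) - x(t^*)}$ finishes the argument (on $(t_0, t^*)$ the derivative never vanishes, so $x$ is strictly monotone there). You instead apply the Identity Theorem once, directly to the level function $g(t) = x(t) - x(t_0)$: since $g$ is analytic and not identically zero, its zero at $t_0$ is isolated, and the ``moreover'' clause falls out by picking any point of the punctured interval and halving the resulting gap. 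Your route is shorter---it needs neither the MVT step nor the well-ordering lemma---and your handling of the boundary case $t_0 = 0$ by backward analytic continuation of the polynomial flow is a clean substitute for the paper's device of invoking continuity of $x'$ at $0$ (a point the paper's lemma relies on but treats only briefly). What the paper's decomposition buys in exchange is some extra structure: its $\delta$ reaches all the way to the first critical point after $t_0$, on which interval $x$ is strictly monotone, and the well-ordering of $\text{ker}(x')$ is a reusable fact in the spirit of Theorem~\ref{Rest Is for the Weak}. Both arguments ultimately rest on the same fact---nonconstant analytic functions have isolated zeros---so the difference is one of packaging rather than of underlying machinery.
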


In order to prove this theorem, we first prove a supporting lemma.

\begin{lemma}
	\label{Well-Ordered Zeroes}
	Let $N = (S,R)$ be a bounded CRN.
	Then for any non-constant species $X \in S$, the set $\text{ker}(x')$ is well-ordered.
\end{lemma}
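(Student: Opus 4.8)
The plan is to show that $\text{ker}(x')$ is a \emph{locally finite} subset of $\nnr$ --- that is, finite on every bounded interval --- from which well-ordering follows at once. The engine for local finiteness is analyticity combined with the Identity Theorem, exactly as in the proof of Theorem~\ref{Rest Is for the Weak}.

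First I would recall that, because $\bm{f}_N$ is polynomial and hence real-analytic and the solution is bounded (so global on $[0,\infty)$), the trajectory $x$ and therefore $x'$ are analytic on $\positives$~\cite{oKraPar02}. Since $X$ is non-constant, $x'$ is not identically zero on the connected set $\positives$, so by the Identity Theorem~\cite{lewis_global_analysis_2014} its zeros are isolated and admit no accumulation point inside $\positives$. Consequently $\text{ker}(x')$ meets every compact subinterval $[a,b] \subseteq \positives$ in only finitely many points, since an infinite subset of a compact interval would, by Bolzano--Weierstrass, force an interior accumulation point.

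Next I would reduce well-ordering to this local finiteness. A subset $A \subseteq \nnr$ is well-ordered under the usual order precisely when every nonempty $B \subseteq A$ has a least element. Given nonempty $B \subseteq \text{ker}(x')$, pick any $b \in B$; if $\text{ker}(x')$ is finite on $[0,b]$, then $B \cap [0,b]$ is a nonempty finite set with a minimum $m$, and since every element of $B$ either exceeds $b$ (hence exceeds $m$) or lies in $[0,b]$ (hence is at least $m$), this $m$ is the least element of $B$. Thus it remains only to establish finiteness on intervals of the form $[0,b]$.

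The main obstacle is the left endpoint $t = 0$: the argument above rules out accumulation of zeros at any point of $\positives$, but a priori the zeros could pile up as $t \to 0^{+}$ (consider $t^2\sin(1/t)$, which is analytic on $\positives$ and continuous at $0$ yet has zeros accumulating at $0$), so mere continuity at $0$ is not enough. To close this gap I would appeal to analyticity of the solution at the initial time as well: since the autonomous system $\odx{\bm{x}}{t} = \bm{f}_N(\bm{x})$ has an entire right-hand side, its solution extends analytically to an open interval $(-\eta,\infty)$ about $0$, so $x'$ is analytic on a connected open neighborhood of $0$. Were $0$ an accumulation point of $\text{ker}(x')$, the Identity Theorem would again force $x' \equiv 0$ and hence $x$ constant, a contradiction. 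Therefore no strictly decreasing sequence in $\text{ker}(x')$ can converge to $0$ (nor to any positive point), every $[0,b]$ contains only finitely many zeros, and $\text{ker}(x')$ is well-ordered.
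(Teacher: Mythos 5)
Your proof is correct, and it runs on the same engine as the paper's---analyticity of $x'$ plus the Identity Theorem to forbid accumulation points of $\text{ker}(x')$---but it is packaged differently and is in one respect more careful than the paper's own argument. The paper argues directly: if $\text{ker}(x')$ were not well-ordered, there would be a strictly decreasing sequence of zeros converging to some $t \in \nnr$; it then says ``assume without loss of generality that $t > 0$'' and applies the Identity Theorem on $\positives$. You instead establish local finiteness of $\text{ker}(x')$ on compact intervals and deduce well-ordering from that, a clean and slightly more general decomposition. More importantly, you explicitly confront the case the paper's WLOG quietly skips: the decreasing sequence of zeros could a priori converge to $t = 0$, which lies on the \emph{boundary} of $\positives$, where the Identity Theorem as invoked does not apply (the paper's subsequent appeal to continuity of $x'$ at $0$ only yields $x'(0) = 0$ \emph{after} $x'$ is already known to vanish on $\positives$; it does nothing to rule out accumulation at $0$). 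Your fix---observing that the solution of an autonomous system with polynomial right-hand side is analytic on an open interval $(-\eta,\infty)$ containing the initial time, so that zeros accumulating at $0$ likewise force $x' \equiv 0$ and hence $x$ constant---is exactly what is needed, and it closes a genuine gap in the published proof rather than merely reproducing it.
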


\begin{proof}
	Suppose that $N$ is a bounded CRN and let $X \in S$ be a non-constant species such that $\text{ker}(x')$ is not well-ordered.
	Then there must be a decreasing sequence of times $\collection{t_n}{n}{\naturals}$ in $\text{ker}(x')$ (all bounded below by $0$) which converge to an accumulation point $t \in \nnr$.
	Assume without loss of generality that $t > 0$.
	Then since $x'$ is analytic on the connected open set $\positives$, $x' = 0$ on $\positives$ by the Identity Theorem.
	Finally, because $x'$ is continuous, it follows that $x'(0) = 0$.
	But $x'$ is non-constant, a contradiction.
	\qed
\end{proof}

\begin{proof}[of Theorem~\ref{Buffer Zone}]
	Let $N$ be as given and let $X \in S$ be a non-constant species.
	Fix a time $t_0 \in \nnr$.
	If there is no time $t_1 > t_0$ for which $x(t_1) = x(t_0)$, then pick $\delta = 1$ and $\epsilon = \frac{1}{2} \abs*{x(t_0) - x(t_0 + 1)}$ to conclude the proof.
	
	Otherwise, let $t_1 > t_0$ be any time for which $x(t_1) = x(t_0)$.
	By the mean value theorem, there must be a time $t \in (t_0,t_1)$ for which $x'(t) = 0$.
	But by Lemma~\ref{Well-Ordered Zeroes}, there is a least element of $\text{ker}(x')$ greater than $t_0$.
	Let that element be $t$.
	Picking $\delta = t - t_0$ and $\epsilon = \frac{1}{2} \abs*{x(t_0) - x(t)}$ concludes the proof.
	\qed
\end{proof}

\begin{figure}[t]
	\begin{center}
	\includegraphics[width=4.5in]{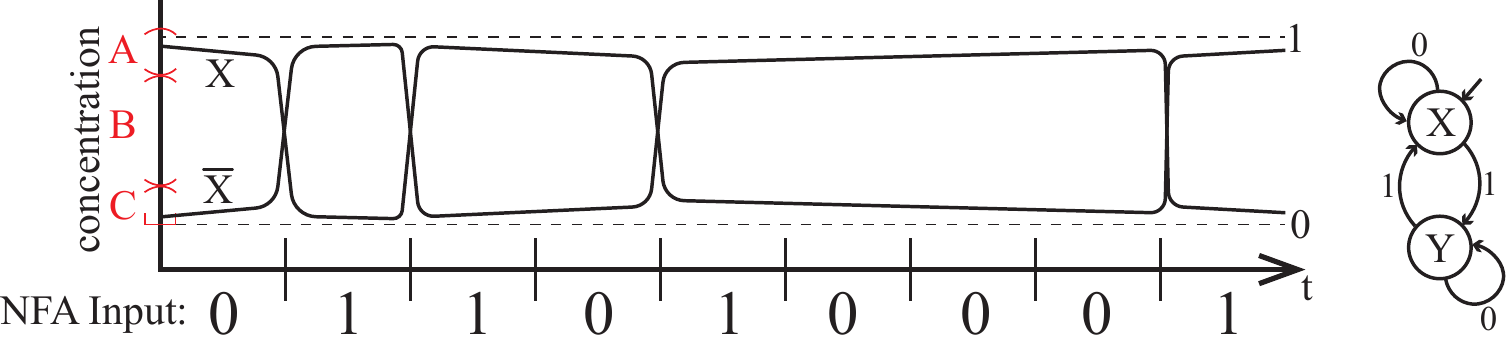}
	\caption{\label{fig:runningExample}
		In~\cite{KLINGE2020114}, Klinge, Lathrop, and Lutz provide a general CRN construction for nondeterministic finite automata (NFAs).
		These NFAs utilize a dual rail system for each state $Z$ with $z(t) \approx 1$ indicating that the NFA is in state $Z$ at time $t$ and $z(t) \approx 0$ indicating the NFA is not in state $Z$ while the complementary species $\overline{Z}$ has the opposite meaning.
		Above, we graph the concentration of $X$ and $\overline{X}$ as input changes and show the approximation regions.}
	\end{center}
\end{figure}

In light of Theorem~\ref{Buffer Zone}, we state a notion of computation useful (but alone insufficient) for CRNs.

\begin{definition}
	A CRN $N = (S,R)$ \emph{$(\epsilon,d)$-computes} a real number $\alpha \in \nnr$ if there is an $X \in S$ and a time $t_0 \in \nnr$ such that $\abs*{x(t) - \alpha} < \epsilon$ for all $t \in (t_0,t_0 + d)$.
\end{definition}

Less formally, a CRN \emph{$(\epsilon,d)$-computes} a real number $\alpha$ if it gets close enough to it for a long enough time.
To continue our earlier example, the correct choice of $\epsilon$ and $d$ make $x(t)$ correctly compute $0$ and $1$ but never the garbage state $\frac{1}{2}$ in Figure~\ref{fig:runningExample}.
This underscores that the particular choice of these two parameters is critical for the CRN's intended purpose.
Indeed, a species $X$ of a \emph{bounded} CRN so computes every element of the closure of its image for some single choice of $d$ for every $\epsilon$ and vice versa!
The latter is obvious (pick $\epsilon$ to be larger than the CRN's bound), and we formally state the former.

\begin{theorem}
	\label{Epsilon-D Computation Needs Stricter Requirements}
	Let $N = (S,R)$ be a bounded CRN, and let $\epsilon > 0$.
	Then there exists a $d > 0$ such that each $X \in S$ $(\epsilon,d)$-computes every element of $cl(x(\nnr))$.
\end{theorem}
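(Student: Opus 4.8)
The plan is to reduce the statement to a single uniform Lipschitz bound on the concentration functions. The crucial observation is that the definition of $(\epsilon,d)$-computes lets the witnessing time $t_0$ depend on the target value $\alpha$; what must be made uniform is only the window length $d$, shared across every $\alpha \in cl(x(\nnr))$ and every species $X \in S$. It therefore suffices to show that no $x$ can move by as much as $\epsilon$ within any time window of a fixed length $d$, no matter where in $\nnr$ that window is placed.

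First I would extract a global speed bound. Since $N$ is bounded and (by the standing convention) initialized at its bounding point, the trajectory $\set*{\textbf{x}(t) : t \in \nnr}$ lies in a bounded subset of $\nnr^S$, whose closure $K$ is compact. For each species $X_i$ we have $x_i'(t) = f_i(\textbf{x}(t))$ with $f_i \in \integers[x_1,\ldots,x_n]$ a polynomial, hence continuous and thus bounded on $K$. Setting $M = \max_i \sup_{t \in \nnr} \abs*{f_i(\textbf{x}(t))} < \infty$, every $x_i$ is $M$-Lipschitz on $\nnr$, so that $\abs*{x_i(t) - x_i(s)} \le M \abs*{t - s}$ for all $s,t$. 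I would then fix $d = \frac{\epsilon}{2(M+1)}$, which is positive and independent of both $\alpha$ and the chosen species; the $+1$ only guards against the degenerate all-constant case $M = 0$.

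Next I would verify the claim for a fixed species $X$ and an arbitrary $\alpha \in cl(x(\nnr))$. By definition of closure there is a time $t_0 \in \nnr$ with $\abs*{x(t_0) - \alpha} < \frac{\epsilon}{2}$, taking $x(t_0) = \alpha$ exactly when $\alpha$ already lies in the image and otherwise selecting $t_0$ from a sequence witnessing $\alpha$ as a limit point. For every $t \in (t_0, t_0 + d)$ the Lipschitz bound gives $\abs*{x(t) - x(t_0)} \le M\abs*{t - t_0} < Md < \frac{\epsilon}{2}$, so by the triangle inequality $\abs*{x(t) - \alpha} \le \abs*{x(t) - x(t_0)} + \abs*{x(t_0) - \alpha} < \epsilon$. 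Hence $X$ $(\epsilon,d)$-computes $\alpha$, and since $d$ was fixed before either $\alpha$ or $X$ was chosen, the same $d$ works simultaneously for all of them.

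The main obstacle, and the only place genuine content enters, is establishing the finite global bound $M$ and hence the uniform Lipschitz constant; everything afterward is a triangle-inequality argument. The subtlety is that boundedness concerns the reachable concentration values, so I must pass to the compact closure $K$ of the trajectory to invoke continuity of the polynomial vector field and conclude that $\sup_t \abs*{f_i(\textbf{x}(t))}$ is genuinely finite rather than merely finite pointwise. Once this uniform speed limit is in hand, the decoupling of ``$d$ is uniform'' from ``$t_0$ may depend on $\alpha$'' does all of the remaining work.
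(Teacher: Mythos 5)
Your proof is correct and follows essentially the same route as the paper's: a uniform bound on $\abs*{x'}$ (the paper asserts one directly from boundedness, while you derive it via compactness of the trajectory's closure and continuity of the polynomial rate functions), a window $d$ proportional to $\epsilon$ divided by that bound, and a triangle-inequality finish. Your only departures are minor refinements --- treating all points of $cl(x(\nnr))$ uniformly via a time $t_0$ with $\abs*{x(t_0) - \alpha} < \epsilon/2$ instead of the paper's case split into image points versus $\inf$ and $\sup$, and taking the speed bound as a maximum over species so that $d$ is explicitly species-independent.
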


\begin{proof}
	Let $N$ and $\epsilon > 0$ be given.
	Fix $X \in S$.
	Since $N$ is bounded, there exists $\beta > 0$ such that $\abs*{x'} < \beta$.
	Define $d = \frac{\epsilon}{2 \beta}$.
	
	Now let $\alpha \in x(\nnr)$.
	By definition, there exists a time $t \in \nnr$ such that $x(t) = \alpha$.
	Then since $\abs*{x'} < \beta$, it must be the case that $\abs*{x(t) - x(t_0)} = \abs*{x(t_0) - \alpha} < \beta * d = \frac{1}{2} \epsilon < \epsilon$ for every $t_0 \in (t,t + d)$.
	Thus $N$ $(\epsilon,d)$-computes $\alpha$.
	
	It remains to show that $N$ $(\epsilon,d)$-computes the elements of $cl(x(\nnr)) \setminus x(\nnr)$.
	Since $x$ is continuous, there can be at most two such elements, $\inf(x(\nnr))$ and $\sup(x(\nnr))$.
	It suffices to show $N$ $(\epsilon,d)$-computes $l = \inf(x(\nnr))$ as the proof is identical for the supremum.
	
	If $l \in x(\nnr)$, there's nothing left to prove, so assume otherwise.
	Then since $l$ is an accumulation point of $x(\nnr)$, there is some time $t \in \nnr$ for which $\abs*{x(t) - l} < \frac{1}{2} \epsilon$.
	Moreover, it follows that $\abs*{x(t_0) - l} = \abs*{x(t_0) - x(t) + x(t) - l} < \beta * d + \frac{1}{2} \epsilon = \epsilon$ for every $t_0 \in (t,t + d)$.
	Thus $N$ $(\epsilon,d)$-computes $l$.
	\qed
\end{proof}

The following definition resolves this $\epsilon,d$ conundrum described above by eliminating any overlap of $(\epsilon,d)$-computed real numbers.

\begin{definition}
	A CRN $N = (S,R)$ \emph{unambiguously computes} a set $A \subseteq \nnr$ if for each $\alpha \in A$ there exists a species $X \in S$ which $(\epsilon_{\alpha},d_{\alpha})$-computes $\alpha$ for some $\epsilon_{\alpha},d_{\alpha} > 0$ and for each distinct $\alpha_1,\alpha_2 \in A$ which $X$ $(\epsilon_{\alpha_1},d_{\alpha_1})$-computes and $(\epsilon_{\alpha_2},d_{\alpha_2})$-computes respectively, the intervals $(\alpha_1 - \epsilon_{\alpha_1},\alpha_1 + \epsilon_{\alpha_1})$ and $(\alpha_2 - \epsilon_{\alpha_2},\alpha_2 + \epsilon_{\alpha_2})$ are disjoint.
\end{definition}

This notion of unambiguous computation leads directly to a robust notion of CRN memory, but we first state a motivating theorem behind its construction.

\begin{theorem}
	\label{CRNs Dislike Dense Subsets}
	No CRN can unambiguously compute a somewhere dense subset $D$ of $\nnr$ for any choice of $\epsilon_{\alpha},d_{\alpha} > 0$ for each $\alpha \in D$.
\end{theorem}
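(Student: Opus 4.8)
The plan is to pit two finiteness constraints against the infinitude forced by density. A CRN has only finitely many species, whereas a somewhere dense set crams infinitely many ``centers'' into arbitrarily small intervals. First I would unpack the hypothesis: $D$ being somewhere dense means $cl(D)$ has nonempty interior, so $cl(D)$ contains some open interval $I$. The definition of unambiguous computation assigns to each $\alpha \in D$ a species $X \in S$ that $(\epsilon_\alpha, d_\alpha)$-computes it, so I would partition $D$ by choosing, for each $\alpha$, one such species $X(\alpha)$ and setting $D_X = \set{\alpha \in D : X(\alpha) = X}$.

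Since $S$ is finite, $D = \bigcup_{X \in S} D_X$ is a finite union, and hence $cl(D) = \bigcup_{X \in S} cl(D_X)$. Because a finite union of nowhere dense sets is again nowhere dense (an elementary fact not requiring Baire category), at least one $cl(D_X)$ must have nonempty interior; otherwise $cl(D)$ could not contain the interval $I$. I would fix such a species $X$ together with an open subinterval $I' \subseteq cl(D_X)$, so that $D_X$ is dense in $I'$.

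With density localized to a single species, the contradiction comes directly from the disjointness clause, which applies precisely because all the relevant points are computed by the same $X$. Pick any $\alpha_1 \in D_X \cap I'$ with its radius $\epsilon_{\alpha_1} > 0$. Since $I'$ is an open interval in which $D_X$ is dense, every point of $I'$ is a limit point of $D_X$; in particular the neighborhood $(\alpha_1 - \epsilon_{\alpha_1}, \alpha_1 + \epsilon_{\alpha_1})$ contains some $\alpha_2 \in D_X$ with $\alpha_2 \neq \alpha_1$. Then $\alpha_2$ lies in both $(\alpha_1 - \epsilon_{\alpha_1}, \alpha_1 + \epsilon_{\alpha_1})$ and (since $\epsilon_{\alpha_2} > 0$) in $(\alpha_2 - \epsilon_{\alpha_2}, \alpha_2 + \epsilon_{\alpha_2})$, so these intervals intersect, contradicting the requirement that intervals attached to distinct points computed by the same species be disjoint.

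The main obstacle is the step that pins down a single species whose computed points stay dense in a subinterval. The tempting shortcut---use density of $D$ to find two nearby points and invoke disjointness---fails, because those two points might be computed by \emph{different} species, for which no disjointness is required. Passing through the finite-union-of-closures observation is exactly what transfers density from the whole set $D$ to one species' share $D_X$, after which the argument is immediate.
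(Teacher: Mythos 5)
Your proof is correct and follows essentially the same route as the paper's: localize density to a single species using finiteness of $S$, then use density to produce two distinct points computed by that species whose $\epsilon$-intervals overlap, contradicting the disjointness clause of unambiguous computation. In fact, your handling of the localization step---partitioning $D$ by species and invoking the fact that a finite union of nowhere dense sets is nowhere dense to extract a subinterval $I'$ in which one species' share $D_X$ remains dense---is more rigorous than the paper's proof, which simply asserts ``without loss of generality'' that a single species computes every point of $D \cap O$; as you correctly observe, that assertion needs exactly this justification, since the restriction to one species could destroy density in $O$ itself while preserving it only on a smaller open set.
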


\begin{proof}
	Let $N = (S,R)$ be a CRN.
	Suppose that $N$ $(\epsilon,d)$-computes a somewhere dense set $D \subseteq \nnr$.
	Let $O \subsetneqq \nnr$ be an open set in which $D$ is dense, and define $D_O = D \intersect O$.
	Since $S$ is a finite set, we may assume without loss of generality that for each $\alpha \in D_O$, $X \in S$ $(\epsilon_{\alpha},d_{\alpha})$-computes $\alpha$ for some $\epsilon_{\alpha},d_{\alpha} > 0$.
	
	Now fix $\alpha_1 \in D_O$.
	Since $D_O$ is dense in $D$, $D_O$ contains another point $\alpha_2 \in (\alpha_1 - \epsilon_{\alpha_1},\alpha_1 + \epsilon_{\alpha_1})$.
	But the intersection of $(\alpha_1 - \epsilon_{\alpha_1},\alpha_1 + \epsilon_{\alpha_1})$ and $(\alpha_2 - \epsilon_{\alpha_2},\alpha_2 + \epsilon_{\alpha_2})$ is clearly not empty, a contradiction.
	\qed
\end{proof}

Theorem~\ref{CRNs Dislike Dense Subsets} shows that many natural encodings of countably infinite sets to bounded intervals cannot be unambiguously computed by a CRN. An example of such is given below where we encode $1 \rightarrow .1$, $2 \rightarrow .01$, $3 \rightarrow .11$, and so on.

\begin{corollary}
	\label{The Natural Encoding Is Dense}
	Let $\func[f]{\naturals}{\interval*{0,1}}$ be the map
	\[ f(n) = \sum_{i = 0}^{\infty} \paren*{\floor*{\frac{n}{2^i}} \mod 2} 2^{-i - 1}. \]
	No CRN can unambiguously compute the set $f(\naturals)$.
\end{corollary}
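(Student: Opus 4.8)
The plan is to reduce to Theorem~\ref{CRNs Dislike Dense Subsets}. Since that theorem forbids unambiguous computation of \emph{any} somewhere dense subset of $\nnr$ for every choice of the parameters $\epsilon_\alpha, d_\alpha$, it suffices to show that $f(\naturals)$ is dense in some nonempty open interval. The natural candidate is $(0,1)$, so the entire argument amounts to understanding the image $f(\naturals)$ well enough to establish density there.

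First I would unpack the definition of $f$. For $n \in \naturals$ with binary expansion $n = \sum_{i \ge 0} b_i 2^i$, where $b_i \in \{0,1\}$ and only finitely many $b_i$ are nonzero, the quantity $\floor*{n / 2^i} \Mod{2}$ is exactly the $i$-th bit $b_i$. Hence $f(n) = \sum_{i \ge 0} b_i 2^{-i-1}$, which is the dyadic rational whose binary expansion after the point is the reversal of the bits of $n$; that is, $n = (b_k \cdots b_1 b_0)_2$ maps to $(0.b_0 b_1 \cdots b_k)_2$. This matches the stated examples $1 \mapsto .1$, $2 \mapsto .01$, and $3 \mapsto .11$.

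Next I would identify the image precisely. Since each $f(n)$ is a finite sum bounded above by $\sum_{i=0}^{k} 2^{-i-1} < 1$, we have $f(n) \in [0,1)$, and distinct $n$ yield distinct bit strings and hence distinct values, so $f$ is injective. Conversely, any terminating binary fraction $(0.c_1 c_2 \cdots c_m)_2 \in [0,1)$ is realized by $n = \sum_{i=1}^{m} c_i 2^{i-1}$, so $f$ surjects onto the dyadic rationals in $[0,1)$. Thus $f(\naturals)$ equals precisely the set of dyadic rationals in $[0,1)$, which is dense in $[0,1]$ and in particular dense in the open interval $(0,1) \subsetneq \nnr$. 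Hence $f(\naturals)$ is somewhere dense, and applying Theorem~\ref{CRNs Dislike Dense Subsets} immediately yields that no CRN can unambiguously compute $f(\naturals)$.

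I expect the only nontrivial step to be the characterization of the image as the full set of dyadic rationals in $[0,1)$; once that identification is in hand, density and the appeal to Theorem~\ref{CRNs Dislike Dense Subsets} are routine. In fact, even the weaker claim that $f(\naturals)$ merely \emph{contains} a set dense in some open interval would suffice, so if the exact surjectivity bookkeeping proved delicate I would fall back on exhibiting, by hand, enough dyadic rationals in the image to witness density in $(0,1)$.
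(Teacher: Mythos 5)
Your proposal is correct and follows essentially the same route as the paper: reduce to Theorem~\ref{CRNs Dislike Dense Subsets} and show $f(\naturals)$ is dense by observing that truncated binary expansions (equivalently, the dyadic rationals in $[0,1)$) all lie in the image---your surjectivity witness $n = \sum_{i=1}^{m} c_i 2^{i-1}$ is exactly the paper's helper function $g$. The only cosmetic difference is that you package this as an exact characterization of the image plus the standard density of dyadics, whereas the paper constructs, for each $a \in [0,1]$, an explicit sequence of image points converging to $a$.
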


\begin{proof}
	First, $f$ clearly converges on all of its domain.
	Then by Theorem~\ref{CRNs Dislike Dense Subsets}, it suffices to show that $f(\naturals)$ is dense in $\interval*{0,1}$.
	
	Fix $a \in \interval*{0,1}$.
	If $a \in f(\naturals)$, there's nothing to prove.
	If $a = 1$, then the sequence $\collection{f(2^n - 1)}{n}{\naturals}$ converges to $a$.
	If $a \notin f(\naturals)$, let $0.a_0a_1a_2\ldots$ be the base $2$ expansion of $a$.
	Define $\func[g]{\naturals}{\naturals}$ by
	\[ g(N) = \sum_{n = 0}^{N - 1} a_n * 2^n. \]
	By construction, the sequence $\collection{f(g(n))}{n}{\naturals}$ converges to $a$.
	Finally, since every $a \in \interval*{0,1}$ is either in $f(\naturals)$ or is the accumulation point of a sequence in $f(\naturals)$, it follows that $f(\naturals)$ is dense in $\interval*{0,1}$.
	\qed
\end{proof}

To avoid this problem, any encoding requires an open interval around each value $\alpha$ the CRN must compute wherein the entire interval is considered to be $\alpha$.
Moreover, a CRN can only have countably many such disjoint sets.
In our running example, Figure~\ref{fig:runningExample} demonstrates three such intervals for each state species.
This leads to the following definition wherein we encode a collection of disjoint open intervals to map to identifying natural numbers.

\begin{definition}
	Let $c \in \positives$. A \emph{memory map} is a map $\func[f]{\extendednaturals}{\powerset{[0,c]}}$ satisfying the following conditions:
	\begin{itemize}
		\parskip0pt \parindent0pt
		\item $f(0) = [0,b)$, where $0 < b \le c$.
		\item $\forall n \in \pints$, $f(n) = (a,b)$, where $a,b \in \rationals$ and $0 < a \le b \le c$.
		\item $\forall m,n \in \naturals$, if $m \neq n$, then $f(n)$ and $f(m)$ are disjoint.
		\item $f(\infty) = [0,c] \setminus \underset{n \in \naturals}{\union} f(n)$ and is countable.
	\end{itemize}
\end{definition}

\begin{definition}
	The set of all memory maps on $[0,c]$ is $\mathcal{M}_c$.
	The \emph{order} of $f \in \mathcal{M}_c$, written $ord(f)$, is the cardinality of the support of $f$ over $\naturals$.
\end{definition}

\begin{definition}
	The \emph{inverse memory map} of $f \in \mathcal{M}_c$ is a map $\func[f^{\leftarrow}]{[0,c]}{\extendednaturals}$ such that for all $r \in [0,c]$, $r \in f(f^{\leftarrow}(r))$.
\end{definition}

In principle, a CRN cannot reasonably be initialized to any state more precise than to an interval of a memory map.
Indeed, the consequence of Corollary~\ref{No Exact Numbers} is the well known fact that if a species ever has a non-zero concentration, it will at almost every time $t > 0$, so no power is gained from being able to initialize a species to $0$.

Before proceeding, the definition of a memory map, it should be noted, is \emph{descriptive} of a CRN, not \emph{prescriptive}.
Any memory map can model any CRN, but not all memory maps model any particular CRN \emph{well}.
For example, any $\beta$-bounded CRN can be modeled by the uninteresting memory map that maps every concentration less than $\beta$ to $0$.
Similarly, a memory map with randomly chosen intervals is both equally valid and equally ill-suited.
We do not yet, however, have all of the definitions necessary to describe what makes for a \emph{good} choice of memory map and so return to this topic later in this section.

Now equipped with a notion of memory, we must define the trajectory of a species $X$ through that memory (and a CRN's trajectory in terms of its species').
This is not inherently clear because a species $X$ must pass over all intermediate memory locations when transitioning between two non-adjacent states.
Even if there are only finitely many such intermediary states, including them in the trajectory provides no additional information.
That $X$ passes through them during the transition is a direct consequence of $x$ being continuous.
In Figure~\ref{fig:runningExample}, for example, we never want to include the $B$ interval in our trajectory.

But since each memory state consists of an open interval, $X$ must spend a non-zero length of time inside of it.
This brings us back to the definition of $(\epsilon,d)$-computability.
If we require $X$ to $(\epsilon,d)$-compute the midpoint of the interval of a memory state with $\epsilon$ being half of the interval's width and $d$ being an adjustable parameter, we can arrive at a useful definition of trajectory.
To fully formalize this, however, we first have to develop a bit more notation.

\begin{definition}
	Let $N = (S,R)$ be a $\beta$-bounded CRN.
	For each $X \in S$, let $f_X \in \mathcal{M}_\beta$ be a memory map.
	A species $X \in S$ is in the \emph{memory state} $m \in \extendednaturals$ at time $t$ if $x(t) \in f_X(m)$.
	Similarly, $N$ is in the memory state $\bm{m} \in \extendednaturals^S$ at time $t$ if for each $X \in S$, $X$ is in the memory state $\bm{m}(X)$. 
\end{definition}

As an abuse of notation, a memory state may also be used as the subset of $\nnr$ ($\nnr^S$) it represents.

\begin{definition}
	Let $N$ be a $\beta$-bounded CRN.
	For $X \in S$, let $f_X \in \mathcal{M}_\beta$ be a memory map.
	$X$ \emph{enters} a memory state $n \in \extendednaturals$ at time $t_0$ if there exists an $\epsilon > 0$ such that for all $0 < \epsilon' < \epsilon$, $x(t_0 - \epsilon') \notin f_X(n)$ and $x(t_0 + \epsilon') \in f_X(n)$.
	Similarly, $X$ \emph{leaves} $n$ at time $t_1$ if there exists an $\epsilon > 0$ such that for all $0 < \epsilon' < \epsilon$, $x(t_1 - \epsilon') \in f_X(n)$ and $x(t_1 + \epsilon') \notin f_X(n)$.
	The \emph{state time} of $X$ in $n$ (with no intermediate states) is the difference $t_1 - t_0$.
\end{definition}

There are a few consequences to the above definitions worth mentioning.
When a species is initialized to a memory state $n$, it leaves $n$ without first having entered $n$.
A species may also transition from $n \in \naturals$ to $\infty$ (or in other words, it may touch the boundary of $n$) and then return to $n$, in which case it does not leave or enter $n$.
This is a desirable property as $\infty$ is not a useful memory state except, perhaps, in the limit as $t \rightarrow \infty$.
Further, no species may enter or leave the memory state $\infty$ by definition.

There remain a few edge cases in the above definitions.
If a species never entered a memory state $n$ before it leaves $n$ (i.e. it was initialized to $n$), then we say it entered at time $t = 0$.
In a similar vein, if a species never leaves a memory state, we say it leaves at $t = \infty$ purely as a matter of notational convenience (even if in the limit it transitions to the memory state $\infty$).

Lastly, we remark that sojourn time (arc length) is generally a better measure of runtime for CRNs~\cite{Pouly2017}.
In the case of bounded CRNs, however, state time suffices as it is always within a constant factor of sojourn time.
This is because each species concentration of a bounded CRN necessarily has a bounded rate of change.
We prove this formally below.

\begin{definition}
	Let $N$ be a $\beta$-bounded CRN.
	For $X \in S$, let $f_X \in \mathcal{M}_\beta$ be a memory map.
	The \emph{sojourn time} of a species $X$ in a memory state $n \in \naturals$ is the arc length of $x$ from between when it enters $n$ at time $t_0$ and when it leaves $n$ at time $t_1$,
	\[ \int_{t_0}^{t_1} \sqrt{1 + \paren*{x' \paren*{t}}^2} dt. \]
	The \emph{state time} of $X$ in $n$ is simply $t_1 - t_0$.
\end{definition}

\begin{lemma}
	\label{Sojourn Time Is Basically State Time}
	Let $N$ be a $\beta$-bounded CRN.
	For $X \in S$, let $f_X \in \mathcal{M}_\beta$ be a memory map.
	Then there exists $\beta' > 0$ such that for every memory state $n$ which $X$ enters, the sojourn time $t_a$ and state time $t_r$ of $X$ in $n$ satisfy $t_a \le \beta' t_r$.
\end{lemma}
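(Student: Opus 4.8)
The plan is to bound the sojourn time integrand by exploiting the boundedness hypothesis. Since $N$ is $\beta$-bounded, each species concentration $x$ stays in $[0,\beta]$ for all time, and consequently its derivative $x'$ is also bounded: because $x' = f_X(\bm{x})$ where $f_X$ is a fixed polynomial with integer coefficients and every argument $x_1,\dots,x_n$ lies in the compact box $[0,\beta]^n$, the continuous function $f_X$ attains a maximum absolute value on that box. Call this bound $M$, so that $\abs*{x'(t)} \le M$ for all $t$. This constant $M$ depends only on $N$ and $\beta$, not on the particular memory state $n$, which is exactly what we need for a uniform $\beta'$.

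With this in hand, the key inequality is a pointwise bound on the arc-length integrand:
\[ \sqrt{1 + (x'(t))^2} \le \sqrt{1 + M^2}. \]
First I would integrate this over the interval $[t_0,t_1]$ during which $X$ occupies state $n$, giving
\[ t_a = \int_{t_0}^{t_1} \sqrt{1 + (x'(t))^2}\, dt \le \int_{t_0}^{t_1} \sqrt{1 + M^2}\, dt = \sqrt{1 + M^2}\,(t_1 - t_0) = \sqrt{1 + M^2}\, t_r. \]
Then setting $\beta' = \sqrt{1 + M^2}$ yields $t_a \le \beta' t_r$, and since $\beta'$ is defined purely from $M$ (hence from $N$ and $\beta$), it is independent of the memory state $n$, so the same $\beta'$ works for every state $X$ enters.

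I do not expect a serious obstacle here; the argument is essentially a one-line integral estimate once the derivative bound is established. The only point requiring a modicum of care is justifying that $x'$ is genuinely bounded over \emph{all} of $[0,\infty)$: this follows because deterministic mass-action dynamics give $x' = f_X(\bm{x})$ as a polynomial in the concentrations (as noted in the preliminaries), and boundedness of the state $\bm{x}(t)$ within the compact box $[0,\beta]^n$ forces the polynomial to take values in a compact range. One should also note the edge case where $X$ never leaves $n$ (so $t_1 = \infty$); there both $t_a$ and $t_r$ are infinite and the inequality holds vacuously, or one may simply restrict attention to states that are both entered and left in finite time, as the statement implicitly concerns finite state times. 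The main conceptual content is simply recognizing that arc length and elapsed time differ only by the factor controlled by the derivative bound, which the boundedness hypothesis supplies uniformly.
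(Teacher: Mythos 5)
Your proposal is correct and follows essentially the same argument as the paper: bound $\abs*{x'}$ by a constant (the paper asserts this directly from boundedness, taking $\beta' = \beta_0 + 1 > \sqrt{1+\beta_0^2}$, while you justify it via continuity of the mass-action polynomial on the compact box $[0,\beta]^n$ and take $\beta' = \sqrt{1+M^2}$), then integrate the pointwise bound on the arc-length integrand and handle the $t_1 = \infty$ case separately. The only difference is cosmetic choice of constant; if anything, your justification of the derivative bound is more explicit than the paper's.
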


\begin{proof}
	Let $N,X,f_X,n$ be as given.
	Since $N$ is bounded, there exists a bound $\beta_0 > 0$ such that $\abs*{x'} < \beta_0$.
	Define $\beta' = \beta_0 + 1$, and let $t_0 < t_1$ be the time $X$ enters state $n$ and leaves state $n$ respectively.
	If $t_1 = \infty$, then $t_a \int_{t_0}^{t_1} \sqrt{1 + \paren*{x' \paren*{t}}^2} dt = \infty = t_1 - t_0 \le \beta' t_r$.
	Otherwise
	\[ t_a = \int_{t_0}^{t_1} \sqrt{1 + \paren*{x' \paren*{t}}^2} dt < \int_{t_0}^{t_1} \sqrt{1 + \paren*{\beta_0}^2} dt < \beta' (t_1 - t_0) = \beta' t_r. \]
	\qed
\end{proof}

We now have the tools necessary to define the trajectory of a CRN.
We first give an informal description here with example and then rigorously define it (see Definition~\ref{Trajectory Definition}).
The trajectory of a $\beta$-bounded CRN $N$ initialized to $\bm{x_0}$ is the ordered sequence of memory states obtained as follows.
Start from the initial memory state $\bm{n_0}$.
Each time one or more species enters a new memory state for which its state time is at least $d$, append the new state of $N$ to the sequence.
Continue indefinitely or until there are no further memory state changes.

This construction avoids the undesirability of recording in-between memory states of other species as they transition to their next memory state.
It also has the added benefit of bringing into the trajectory a notion of a species staying in a memory state for a long time.
In a species trajectory, we merely record where the species's concentration goes to but not for how long it stays there.
In the memory trajectory of a full CRN, however, if a species's memory state only rarely changes, we can see that behavior in how infrequently it changes in comparison to other species.
For example, if one wishes to record a species's memory state at regular intervals, the simple solution is to set up a clock with an appropriate period which has no interaction with the rest of the CRN except to place itself into the memory trajectory as a timestamp.

To complete our running example, Figure~\ref{fig:runningExample} has the following trajectory (for the species $X$, $\overline{X}$): $(A,C)(C,A)(A,C)(C,A)(A,C)$.
In this case, because the construction of the CRN requires that $x + \overline{x} = 1$, a symmetric choice of intervals $A$ and $C$ across $\frac{1}{2}$ causes $X$ and $\overline{X}$ to always be in opposite states at all times.

Using this intuition, we now formally construct the definition of trajectory as follows.

\begin{definition}
	\label{Trajectory Definition}
	Let $N = (S,R)$ be a $\beta$-bounded CRN at $\bm{x_0} \in \nnr^S$.
	For $X \in S$, let $f_X \in \mathcal{M}_\beta$ be a memory map.
	The \emph{memory trajectory} of $N$ when $N$ is initialized to $\bm{x_0}$ with delay $d \in \positives$, written $\mbox{\textbf{traj}}(\bm{x_0},d)$, is a sequence of $\naturals^S$ defined by $\mbox{\textbf{traj}}(\bm{x_0},d)(n)(X) = m_{\mbox{last}}(X,\bm{x_0},\bm{m}_{\mbox{next}}^n(\bm{x_0},0,d),d)$ where $m_{\mbox{last}}$ and $\bm{m}_{\mbox{next}}^n$ are helper functions defined below.
\end{definition}

To define the helper functions in the above definition, let $N = (S,R)$ be a $\beta$-bounded CRN at $\bm{x_0} \in \nnr^S$.
For $X \in S$, let $f_X \in \mathcal{M}_\beta$ be a memory map.
Let $T(X,\bm{x_0},d)$ be the set of times when $X$ enters a memory state for which its state time is at least $d \in \positives$ when $N$ is initialized to $\bm{x_0}$, and define $\bm{T}(\bm{x_0},d) = \underset{X \in S}{\union} T(X,\bm{x_0},d)$.

Next define $\func[\bm{m}_{\mbox{next}}]{\nnr^S \times \nnr \times \positives}{\nnr}$ to be the function which, given an initial state $\bm{x_0}$ of $N$, a time $t \in \nnr$, and a delay $d > 0$, selects the least $t_0 \in \bm{T}(\bm{x_0},d)$ for which $t_0 > t$.

First, by Corollary~\ref{No Exact Numbers}, $f^{\leftarrow}_X = \infty$ only when $X$ is instantaneously between memory states or if $X$ is constant and initialized to such a value.
Both are undesirable system behavior easily avoided.
$\bm{m}_{\mbox{next}}$ specifically excludes the former from trajectories while the latter is a mere matter of initialization.
Unless otherwise specified, we \emph{never} initialize a CRN to such a state even if it is a state for which the CRN is bounded.
Second, there is always a least element of each $\bm{T}(\bm{x_0},d)$ for $\bm{m}_{\mbox{next}}$ to select since a species $X$ can be in at most two memory states (leaving one for the other) per every $d$ interval of time, which we formally state below.

\begin{lemma}
	\label{CRNs Change State Only So Quickly}
	Let $N = (S,R)$ be a $\beta$-bounded CRN initialized to $\bm{x_0} \in \nnr^S$.
	Fix a delay $d \in \positives$.
	Then for any $d$ interval of time, $N$'s memory trajectory contains at most $2 \card{S}$ memory states.
\end{lemma}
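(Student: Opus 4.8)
The plan is to reduce the whole-CRN bound to a per-species statement and then sum over the $\card{S}$ species. A new element is appended to the memory trajectory exactly at the times of $\bm{T}(\bm{x_0},d) = \bigcup_{X \in S} T(X,\bm{x_0},d)$, so it suffices to bound, for a fixed interval $I$ of length $d$, the number of significant entry times each species contributes to $I$, together with the state each species occupies at the start of $I$.

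First I would fix a single species $X \in S$ and argue that $X$ enters at most one new \emph{significant} memory state (one whose state time is at least $d$) during $I$. The key observation is a ``locking'' phenomenon: if $X$ enters such a state $n$ at a time $s \in I$, then by the definition of state time $X$ does not leave $n$ before time $s + d$; since $s$ lies in an interval of length $d$, the value $s + d$ already exceeds the right endpoint of $I$, so $X$ remains in $n$ throughout the remainder of $I$ and cannot enter a second significant state inside $I$. Combining this single possible new entry with the one memory state $X$ occupies at the start of $I$ shows that $X$ is in at most two significant memory states over $I$.

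Then I would sum over species. Since each $X$ contributes at most one entry time to $\bm{T}(\bm{x_0},d) \cap I$ and there are $\card{S}$ species, at most $\card{S}$ transition times fall in $I$; together with the initial whole-CRN state this accounts for the trajectory entries over $I$, and the per-species count of two yields the stated bound of $2\card{S}$. Along the way I would invoke Lemma~\ref{Well-Ordered Zeroes} to guarantee that $\text{ker}(x')$---and hence the set of entry and leave times---is well-ordered, so that ``the first significant entry in $I$'', ``the state occupied at the start of $I$'', and the counting itself are all well defined and the entry times cannot accumulate.

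The main obstacle is the careful handling of the entry/leave bookkeeping and the edge cases flagged after Definition~\ref{Trajectory Definition}: a species may be initialized directly into a state (entering at $t = 0$), may touch the boundary memory state $\infty$ and return without entering or leaving, or may never leave a state (leaving at $t = \infty$). I must confirm that the locking argument is insensitive to these cases---for instance, that touching $\infty$ and returning does not manufacture a spurious extra significant entry---and that the inequality placing $s + d$ beyond the endpoint of $I$ holds regardless of whether $I$ is taken open, half-open, or closed of length $d$. These are the details that make the clean factor-of-two bound rigorous rather than merely heuristic.
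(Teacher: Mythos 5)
Your proposal is correct and takes essentially the same route as the paper: both proofs rest on the single observation that a species entering a memory state whose state time is at least $d$ cannot enter another such state within time $d$, so each species contributes boundedly many entry times to any length-$d$ window, and summing over the $\card{S}$ species yields the $2\card{S}$ bound. The only difference is bookkeeping: the paper counts at most two entry times per species directly in the closed interval $[t,t+d]$ (which quietly handles the endpoint case your ``locking'' step worries about, since entering at exactly $t$ and exactly $t+d$ is possible there), whereas you count one new entry plus the state held at the start of the window---both accountings give the stated bound.
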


\begin{proof}
	Let $N$ and $d$ be as given, and fix a time $t \in \nnr$.
	By definition, the set of times when $X \in S$ enters a memory state, $T(X,\bm{x_0},d)$, has the following property.
	For any two distinct times $t_1,t_2 \in T(X,\bm{x_0},d)$, $\abs*{t_1 - t_2} \ge d$ regardless of the choice of memory map.
	It must then be the case that the union of all sets, $\bm{T}(\bm{x_0},d)$, can at most contain $2 \card{S}$ elements of the interval $\interval*{t,t + d}$.
	As such, the memory trajectory of $N$ can only have at most $2 \card{S}$ memory states during this time.
	\qed
\end{proof}

From this lemma, $\bm{m}_{\mbox{next}}$ is well defined.
We extend its definition to a recursive form as follows.
\begin{displaymath}
	\bm{m}_{\mbox{next}}^n(\bm{x_0},t,d) = \left\lbrace
	\begin{array}{l|r}
		\bm{m}_{\mbox{next}}^{n - 1}(\bm{x_0},\bm{m}_{\mbox{next}}(\bm{x_0},t,d),d) & n > 0 \\
		t & n = 0
	\end{array}
	\right.
\end{displaymath}

To finish formalizing the definition of memory trajectory, $\func[m_{\mbox{last}}]{S \times \nnr^S \times \nnr \times \positives}{\naturals}$ is the function which, given a species $X \in S$, an initial state $\bm{x_0}$ of $N$, a time $t \in \nnr$, and a delay $d > 0$, returns the last memory state $n \in \extendednaturals$ for which species $X$ enters $n$ at a time $t_0 \le t$ when $N$ is initialized to $\bm{x_0}$.
More intuitively, $m_{\mbox{last}}$ remembers the current memory state of a species while it is transitioning to another memory state.

We can at last now state what makes for a good memory map.
The guiding principle behind the choice of memory map is that a CRN in a memory state should behave identically going forward regardless of what particular concentration each species has inside of it.
In the spirit of Theorem~\ref{CRNs Dislike Dense Subsets}, this then leads to the following natural definition.

\begin{definition}
	Let $N = (S,R)$ be a uniformly $\beta$-bounded CRN and, for each $X \in S$, a memory map $f_X \in \mathcal{M}_\beta$.
	Fix a delay $d \in \positives$.
	$N$ is \emph{memory deterministic} (with respect to $\collection{f_X}{X}{S}$ and delay $d$) if there is a function $\func[\delta]{\naturals^S}{\naturals^S}$ such that if $\bm{m} \in \naturals^S$ is a memory state in $N$'s memory trajectory, then the next memory state in $N$'s memory trajectory (if one exists) is $\delta(\bm{m})$. When no such memory state exists, $\delta(\bm{m}) = \bm{m}$.
\end{definition}

It is important to note that the transition function $\delta$ described in the above definition is relative to which memory state(s) the CRN it describes may be initialized.
The behavior of unreachable memory states are outside the scope of the definition.
With respect to any such memory state, $\delta$'s behavior is unrestricted.
In general, $\delta$ itself need not necessarily even be computable (although it typically should be).
For a well behaved CRN (one which admits many possible initializations), however, $\delta$ must satisfy the definition for every valid initialization simultaneously.

Before moving on, observe that \emph{all} bounded CRNs are memory deterministic for \emph{a} choice of memory map.
Recall that the memory map which maps all concentrations to $0$ is valid for every CRN.
Similarly, each CRN modeled with this memory map is memory deterministic with $\delta(\bm{m}) = \bm{m}$.
Otherwise put, a good choice of memory map for a CRN requires not just that it be memory deterministic but that it is also sufficiently refined to produce a useful model.

Now, unsurprisingly, the notion of a memory map bears a strong resemblance to a Turing machine tape.
We know that CRNs and Turing machines are equivalent models~\cite{Fages2017}.
The question remains, however, if one model can outperform the other in some significant way.
We can address this question in one direction by providing a means for a Turing machine to simulate a CRN.
In general, this is a difficult task.
With memory maps, this becomes easier.
Since neither model is allowed to speed up indefinitely, we may treat a single step of a Turing machine as a constant length of time.
Then we may define that a Turing machine simulates a CRN $N$ if it follows $N$'s memory trajectory on its tape(s).
Formally, we have the following definitions.

\begin{definition}
	Fix $k \in \pints$.
	Let $\Lambda = \collection{(\bm{\omega_n},t_n)}{n}{\naturals}$ be a sequence of tuples in $\paren*{\bin^*}^k \times \nnr$.
	A Turing machine with at least $k$ tapes initialized to $\bm{\omega_0}$ \emph{follows} $\Lambda$ if there is a strictly increasing computable sequence $\collection{s_n}{n}{\naturals}$ of $\naturals$ such that for each $i \in \mathbb{Z}_k$, the contents of tape $T_i$ at step $s_n$ is $\bm{\omega_n}(i)$.
	Similarly, $M$ \emph{real-time follows} $\Lambda$ if there is a constant $c > 0$ such that each $s_n \le c t_n$
\end{definition}

\begin{definition}
	Let $N = (S,R)$ be a (uniformly) $\beta$-bounded CRN and, for each $X \in S$, let $f_X \in \mathcal{M}_\beta$.
	Fix a delay $d \in \positives$.
	A Turing machine $M$ \emph{follows $N$ according to $\collection{f_X}{X}{S}$ with delay $d$} if for each $X \in S$ there exists a computable injective map $\func[\mbox{itoa}_X]{\naturals}{\bin^*}$ such that $M$ follows
	\[ \Lambda = \collection{\paren*{\mbox{\textbf{itoa}} \paren*{\mbox{\textbf{traj}} \paren*{\bm{x_0},d}(n)},\bm{m}^n_{\mbox{next}}(\bm{x_0},0,d)}}{n}{\naturals} \]
	when initialized to $\bm{x_0} \in \nnr^S$ (for every initialization $\bm{x_0} \in \nnr^S$ for which $N$ is bounded), where $\mbox{\textbf{itoa}} \paren*{\mbox{\textbf{traj}} \paren*{\bm{x_0},d}(n)}(X) = \mbox{itoa}_X(\mbox{\textbf{traj}} \paren*{\bm{x_0},d}(n)(X))$ for $X \in S$ and $n \in \naturals$.
	Similarly, $M$ \emph{real-time follows $N$ according to $\collection{f_X}{X}{S}$ with delay $d$} if $M$ real-time follows $\Lambda$ when initialized to $\bm{x_0} \in \nnr^S$ (for every initialization $\bm{x_0} \in \nnr^S$ for which $N$ is bounded).
\end{definition}

We can extend our running example to these definitions as follows.
For the itoa functions, interval $A$ maps to $1$, $B$ maps to $10$, and $C$ maps to $0$.
Moreover, since the CRN was constructed directly from a finite automaton, it only takes two steps to compute each subsequent memory state and write it to the appropriate tape.
If follows trivially, then, that there is a Turing machine which real-time follows the CRN.

More generally, since Turing machines and CRNs are equivalent models~\cite{Fages2017}, there is always a Turing machine that follows any CRN $N$.
The more interesting (and far more difficult) question is if there always exists a Turing machine $M$ and some choice of itoa functions for which $M$ real-time follows $N$.
Intuitively, analog computing \emph{should} be more efficient than discrete computing in some respect.
Indeed, were a CRN either unbounded or if it were allowed an unbounded number of species, this is easy to show.
To see why this is less certain for robust, bounded CRNs, we need a few lemmas.

The natural first question to ask is how can a Turing machine can keep up in real-time with a CRN from the definition of real-time following.
A CRN, after all, is allowed to change all of its species concentrations simultaneously while a Turing machine, following the CRN's memory trajectory and not directly simulating the CRN, must keep all but one (except in the unlikely case where two or more species change memory state at the exact same time) of its species-tracking tapes effectively constant between memory trajectory transitions.

This is not a limitation since a species must linger in a memory state for a minimum length of time.
A CRN with $n$ species and delay $d$ can only experience at most $n$ memory states in every open interval of length $d$ (see Lemma~\ref{CRNs Change State Only So Quickly}).
This is what makes a Turing machine $M$ real-time following a CRN occur in real-time.
It follows that $M$ can compute each of these state changes sequentially while only requiring a constant factor of $\card{S}$ more time in the worst case.

The remaining difficulty is to show that a bounded CRN cannot `cheat' in the sense that a Turing machine would require an infinite alphabet or an infinite number of states or tapes to real-time follow it.
We show this is the case when each memory map has only finite order and the transitions between memory states is memory deterministic.

\begin{theorem}
	\label{CRN Finite Automata are Real-Time Followable}
	Let $N = (S,R)$ be a uniformly $\beta$-bounded CRN, let $f_X \in \mathcal{M}_\beta$ for each $X \in S$ with $ord(f_X) < \infty$, and let $d \in \positives$.
	If $N$ is memory deterministic, then there is a Turing machine which real-time follows $N$ according to $\collection{f_X}{X}{S}$ with delay $d$.
\end{theorem}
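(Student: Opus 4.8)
The plan is to exploit the two hypotheses---finite order and memory determinism---to reduce the memory trajectory of $N$ to the orbit of a single \emph{finite}-state transition function, which a fixed Turing machine can reproduce step-for-step without ever measuring the CRN. First I would note that the finite-order hypothesis collapses the reachable state space to a finite set: since $ord(f_X) < \infty$ for each $X \in S$, each species visits only finitely many memory states in $\naturals$ (namely those in the support of $f_X$), and because $S$ is finite, every memory state appearing in any trajectory lies in the finite product $\prod_{X \in S} \mathrm{supp}(f_X) \subseteq \naturals^S$. By memory determinism there is a single $\func[\delta]{\naturals^S}{\naturals^S}$, valid for \emph{every} bounded initialization simultaneously, with $\mbox{\textbf{traj}}(\bm{x_0},d)(n+1) = \delta(\mbox{\textbf{traj}}(\bm{x_0},d)(n))$; hence by induction $\mbox{\textbf{traj}}(\bm{x_0},d)(n) = \delta^n(\mbox{\textbf{traj}}(\bm{x_0},d)(0))$, and the restriction of $\delta$ to the finite reachable set is a finite table. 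This is exactly the step that settles the ``cheating'' concern raised before the theorem: a finite table can be hard-wired into the machine's finite control, so no infinite alphabet, state set, or number of tapes is needed.

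Next I would build the machine $M$. Choose each $\mbox{itoa}_X$ to be any fixed injective binary encoding of $\naturals$; on the finite relevant domain it is automatically computable and has bounded output length. Give $M$ one species-tape $T_X$ per $X \in S$ together with a bounded work region, and for a given bounded initialization $\bm{x_0}$ start $T_X$ holding $\mbox{itoa}_X(\mbox{\textbf{traj}}(\bm{x_0},d)(0)(X))$. The program---identical for all $\bm{x_0}$---then loops: decode the current memory state $\bm{m} \in \naturals^S$ from its tapes, compute $\delta(\bm{m})$ by table lookup, and overwrite each $T_X$ with the new code, padding each pass to a fixed number $K$ of steps. Since the reachable states, the lookup table, and the code lengths are all finite, each pass costs at most the constant $K$, so setting $s_n = Kn$ gives a strictly increasing computable sequence for which $T_X$ holds $\mbox{itoa}_X(\mbox{\textbf{traj}}(\bm{x_0},d)(n)(X))$ at step $s_n$; when the trajectory terminates, $\delta(\bm{m}) = \bm{m}$ keeps the tapes correctly fixed. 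Thus $M$ follows $N$.

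The main obstacle is the real-time clause $s_n \le c\,t_n$, where $t_n = \bm{m}^n_{\mbox{next}}(\bm{x_0},0,d)$ is the true time of the $n$-th recorded transition. Here I would invoke Lemma~\ref{CRNs Change State Only So Quickly}: every length-$d$ window of time contains at most $2\card{S}$ elements of $\bm{T}(\bm{x_0},d)$, so covering $(0,t_n]$ by $\lceil t_n/d\rceil$ such windows yields $n \le 2\card{S}\lceil t_n/d\rceil$, hence the eventually-linear lower bound $t_n \ge \frac{d}{2\card{S}}\,n - d$. Combined with $s_n = Kn$, this gives $s_n \le c\,t_n$ for all but finitely many $n$; the remaining finitely many indices each have a fixed positive $t_n$ (and $s_0 = 0 = t_0$ is immediate), so enlarging $c$ to dominate the finite set of ratios $s_n/t_n$ completes the bound. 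The one delicate point is that $t_1$ can be made arbitrarily small across different initializations, so the constant $c$ may depend on $\bm{x_0}$; this is consistent with the definition, in which the constant witnessing ``real-time follows $\Lambda$'' is quantified separately for each trajectory. This shows $M$ real-time follows $N$ according to $\collection{f_X}{X}{S}$ with delay $d$, as required.
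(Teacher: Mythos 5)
Your proposal is correct and takes essentially the same approach as the paper's proof: memory determinism plus finite order collapses the reachable memory states to a finite set, $\delta$ is hard-wired into the machine's finite control, one tape per species is rewritten in constant time per transition, and Lemma~\ref{CRNs Change State Only So Quickly} supplies the lower bound on the CRN's transition times that makes the simulation real-time. Your explicit bound $t_n \ge \frac{d}{2\card{S}}\,n - d$ and your handling of the finitely many small indices and the initialization-dependent constant $c$ are more careful than the paper's one-line appeal to that lemma, but the argument is the same.
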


\begin{proof}
	Since $N$ is memory deterministic, there is a function $\func[\delta]{\naturals^S}{\naturals^S}$ such that if $N$'s memory trajectory contains memory state $\bm{m} \in \naturals^S$, then $N$'s memory trajectory's next memory state is $\delta(\bm{m})$.
	Define the injective function $\func[\mbox{itoa}]{\naturals}{\bin^*}$ to be the map from natural numbers to their binary expansion with $\mbox{\textbf{itoa}}$ being the equivalent map on $\naturals^S$.
	We now informally describe the Turing machine $M$ which real-time follows $N$.
	
	Suppose we initialize $N$ to $\bm{m_0} \in \naturals^S$.
	$M$ contains $\card{S}$ tapes (indexed by $S$) initialized to $\mbox{\textbf{itoa}}(\bm{m_0})$.
	For each memory state $\bm{m} \in \underset{X \in S}{\times} \mathbb{Z}_{ord \paren*{f_X}}$, $M$ has a state $q_{\bm{m}}$.
	From $q_{\bm{m}}$, $M$ clears each tape and then writes $\mbox{\textbf{itoa}}(\delta(\bm{m}))$ onto its tapes.
	Lastly, $M$ then either transitions to state $q_{\delta(\bm{m})}$ or halts if $N$ has no further memory states in its memory trajectory.
	
	Clearly $M$ follows the strings $N$'s species produces as intended.
	It remains to show that $M$ also satisfies the time requirements.
	Since the order of each memory map is finite, writing each new string onto a tape requires only constant time.
	Moreover, the sequence of steps recording when it finishes each one of these processes is clearly computable from counting the number of transitions in $M$'s state space.
	Finally, Lemma~\ref{CRNs Change State Only So Quickly} tells us that $N$, no matter how efficient, takes \emph{at least} constant time to transition between its memory states regardless of its initialization.
	\qed
\end{proof}

\begin{corollary}
	\label{CRNs which Halt are Real-Time Followable}
	Let $N = (S,R)$ be a uniformly $\beta$-bounded CRN, and let $f_X \in \mathcal{M}_\beta$ for each $X \in S$.
	Fix a delay $d \in \positives$.
	If $N$ is memory deterministic and $N$'s memory trajectory is either finite or there exists a memory state which appears at least twice in it, then there is a Turing machine which real-time follows $N$ according to $\collection{f_X}{X}{S}$ with delay $d$.
\end{corollary}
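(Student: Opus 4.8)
The plan is to reduce Corollary~\ref{CRNs which Halt are Real-Time Followable} to Theorem~\ref{CRN Finite Automata are Real-Time Followable} by showing that either hypothesis---a finite memory trajectory, or a repeated memory state---forces $N$ to visit only finitely many distinct memory states. Once that is established, the memory maps involved are effectively of finite order (at least on the reachable portion), and the construction of Theorem~\ref{CRN Finite Automata are Real-Time Followable} applies directly.

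First I would dispose of the finite-trajectory case: if the memory trajectory is finite, then trivially only finitely many distinct memory states appear, so $N$ visits a finite set $Q \subseteq \naturals^S$ of memory states. Second, I would handle the repeated-state case using memory determinism. Since $N$ is memory deterministic, there is a transition function $\delta$ such that each memory state $\bm{m}$ in the trajectory is followed by $\delta(\bm{m})$. If some memory state $\bm{m}^*$ appears at least twice in the trajectory, then because $\delta$ is a \emph{function}, the trajectory must be eventually periodic: the sequence of states from the first occurrence of $\bm{m}^*$ onward is completely determined by $\delta$ and repeats with the period between the two occurrences. Hence the set of distinct memory states appearing in the trajectory is again finite.

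In both cases, then, the reachable memory states form a finite set $Q$, and each species $X$ takes only finitely many distinct values of $\bm{m}(X)$ across $Q$. I would build the Turing machine $M$ exactly as in the proof of Theorem~\ref{CRN Finite Automata are Real-Time Followable}, but with one state $q_{\bm{m}}$ for each $\bm{m} \in Q$ (rather than one for every element of $\times_X \mathbb{Z}_{ord(f_X)}$), using the same $\mbox{\textbf{itoa}}$ encoding and the same clear-and-rewrite routine. Because $Q$ is finite, the encodings $\mbox{\textbf{itoa}}(\bm{m})$ have bounded length, so each rewrite takes constant time; the sequence $\collection{s_n}{n}{\naturals}$ recording completion steps is computable by counting $M$'s transitions; and Lemma~\ref{CRNs Change State Only So Quickly} again guarantees $N$ spends at least a constant amount of time per memory transition, so $M$ keeps pace in real-time.

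The main obstacle I anticipate is making the ``eventually periodic'' argument fully rigorous in the repeated-state case, since the definition of memory determinism only constrains $\delta$ on states that \emph{actually appear} in the trajectory, and one must verify that determinism genuinely forces periodicity rather than merely permitting it---in particular that the first repeated occurrence pins down the entire tail. Beyond that subtlety, the argument is essentially a reduction, and the bulk of the construction is inherited verbatim from Theorem~\ref{CRN Finite Automata are Real-Time Followable}, so I would keep the exposition brief and cite that proof rather than reproduce the Turing machine and its timing analysis in full.
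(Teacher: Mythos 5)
Your proposal is correct and takes essentially the same approach as the paper: both arguments split into the same two cases (finite trajectory, or repetition forcing a deterministic loop), conclude that only finitely many memory states are visited, and then reuse the construction of Theorem~\ref{CRN Finite Automata are Real-Time Followable} on that finite set. Your explicit justification of eventual periodicity via the transition function $\delta$ is a slightly more careful rendering of the step the paper simply asserts, and restricting the Turing machine's states to the reachable set $Q$ is equivalent to the paper's device of excising unused states from the memory maps.
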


\begin{proof}
	If $N$'s memory trajectory is finite, then it only uses finitely many memory states.
	Excising the unused states from the memory maps causes each to have only finite order, at which point Theorem~\ref{CRN Finite Automata are Real-Time Followable} applies.
	Similarly, if a memory state appears twice in $N$'s memory trajectory, then because $N$ is memory deterministic, it must necessarily loop through a finite set of memory states forever.
	Because $N$ enters into an infinite loop through a finite set, it uses only finitely many memory states and Theorem~\ref{CRN Finite Automata are Real-Time Followable} again applies.
	\qed
\end{proof}
\section{Discussion}
In this paper, we have shown that \emph{only} the algebraic real numbers are computable by CRNs using exponentially stable equilibria.
Intuitively, this means that every transcendental real number cannot be computed robustly by a CRN in the sense of Definition~\ref{Lyapunov CRN Computability}.
This led us to explore in Section~\ref{sec:mmcrn} what it means for a CRN to compute robustly.
We started from two notions of computation.
First, a CRN can compute a value exactly, which a non-constant CRN can achieve only for a measure zero length of time.
Second, a CRN may compute a value in the limit, which has two problems of its own.
A CRN never precisely achieves a value computed in the limit.
Moreover, we showed earlier in Theorem~\ref{thm:algeqlup} that only the algebraic numbers can be so computed reliably.
Any non-algebraic number, if the CRN is improperly initialized with any epsilon error, cannot be computed in the limit.

These limitations led us to ask what happens when we require a CRN to behave identically for a range of inputs rather than a single set of concentrations. The result was the notion of a memory map, a strangely discrete model of an analog implementation of computation. Arguably, under this model, a CRN's reactions correspond to transitions between states of a Turing machine while species concentrations correspond to tape states.

This ultimately led to Theorem~\ref{CRN Finite Automata are Real-Time Followable}. In the more familiar terminology of the discrete world, it tells us that a robust CRN is no more capable of executing a NFA than a Turing machine is. This is perhaps unsurprising. The main advantage a CRN has to leverage over a Turing machine is in its ability to rewrite its entire tape with a new word of any length. With finite memory, this advantage is lost.

Notice, however, that Theorem~\ref{CRN Finite Automata are Real-Time Followable} says nothing about the \emph{existence} of a robust CRN capable of simulating a NFA. For that, we turn to~\cite{KLINGE2020114} for a CRN with a more restrictive notion of robustness which nonetheless satisfies the definitions we derived here and Theorem~\ref{CRN Finite Automata are Real-Time Followable}. We briefly summarize this below as an illustrative example.

Given a NFA $M$, a CRN $N$ is constructed with two species, $X_q$ and $\overline{X}_q$, for each state $q$ of $M$. These species alternatively take concentrations close to $1$ or $0$ to represent $M$ being in state $q$ or not in state $q$ respectively for $X_q$ and vice versa for $\overline{X}_q$. The appropriate memory map for each of these species would be to map $0$ to an interval around $0$, $1$ to an interval around $1$, and $2$ to everything in-between. The correct delay to choose for this CRN is the length of the clock cycle (which also admits an identical memory map). For the input signal (which, again, admits an identical memory map), we may assume that there is an external CRN generating it which makes the $N$ memory deterministic.

First, note that $N$ is uniformly bounded on all of its valid inputs. Moreover, if we apply Theorem~\ref{CRN Finite Automata are Real-Time Followable} to the CRN described above, we obtain a Turing machine which not only behaves identically but can be transformed back into the same CRN~\cite{KLINGE2020114}. As such, these are truly inverse statements. Moreover, the theorem can be applied to more general cases as well.

Given a Turing machine $M$, Fages \etal construct a GPAC-generable function (easily translated into the CRN world) that simulates $M$ within bounded time and tape space~\cite{Fages2017}. The input parameters for each bound can be adjusted, of course, but once fixed, the resulting simulation permits a single memory map model for all of its input configurations to which Corollary~\ref{CRNs which Halt are Real-Time Followable} applies. In a sense, these, too, are inverse statements.

Now the question becomes where to go from here. It is known that, given an NFA, there is a robust CRN which simulates it in real-time~\cite{KLINGE2020114}. Similarly, we have provided a proof that, given a robust CRN with memory maps of only finite order, there is a Turing machine which real-time follows it. In short, for the regular languages, robust CRNs and Turing machines are fully equivalent models with neither having an advantage over the other. We conjecture that the same is true of an arbitrary robust CRN, that is given a robust CRN with a memory deterministic collection of memory maps, there is a Turing machine which real-time follows it. This, if true, has several important implications.

First, it's known that CRNs and Turing machines can simulate each other with a polynomial-time slowdown~\cite{Pouly2017}. If this conjecture is true, even in a more restricted form, it would eliminate the slowdown from a Turing machine simulating a CRN.

Of perhaps more interest is the Hartmanis-Stearns Conjecture (HSC)~\cite{HartmanisStearns1965}. Both Turing machines and robust CRNs are clearly capable of outputting the digits of a rational number in real-time. For Turing machines, this means writing to some output tape. For CRNs, this \emph{does not} mean outputting a concentration but rather raising a concentration high or low in a memory trajectory in the appropriate sequence. In this manner, assuming our stated conjecture, then if one could construct a robust CRN to output the digits of a nonrational algebraic number, it would also resolve the HSC for Turing machines.

\subsubsection*{Acknowledgments.}
The authors thank anonymous reviewers for useful feedback and suggestions.
This research supported in part by NSF grants 1900716 and 1545028.

\bibliography{nagabib}

\end{document}